 \definecolor{jens}{rgb}{.2,0.7,.9}
\definecolor{darkpastelgreen}{rgb}{0.01, 0.75, 0.24}
 \newtheorem{theorem}{Theorem}
 \newtheorem{conjecture}{Conjecture}
 \newtheorem{lemma}[theorem]{Lemma}
 \newtheorem{corollary}[theorem]{Corollary}
\definecolor{henrik}{rgb}{1,.4,0}
\definecolor{purple}{rgb}{1,0,1}
\newcommand{\blue}[1]{{\color{blue}#1}}
\newcommand{\red}[1]{{\color{red}#1}}
\newcommand{\mc}[1]{\mathcal{#1}}
\newcommand{\mb}[1]{\mathbb{#1}}
\newcommand{\tr}{\mathrm{Tr}} 
\newcommand{\id}{\mb{1}}
\renewcommand{\mod}{\mathrm{\, mod\, }}
\newcommand{\one}{\mathbf{1}}
\newcommand{\1}{\mathrm{id}}
\newcommand{\N}{\mb{N}}
\newcommand{\R}{\mb{R}}
\renewcommand{\1}{\id}
\newcommand{\norm}[1]{\left\Vert #1 \right\Vert}
\newcommand{\ket}[1]{\left.\left|{#1}\right.\right\rangle}
\newcommand{\bra}[1]{\left.\left\langle{#1}\right.\right|}
\newcommand{\ketbra}[2]{\ket{#1} \!\! \bra{#2}}
  \newcommand{\proj}[1]{\ketbra{#1}{#1}}
\newcommand{\fu}{Dahlem Center for Complex Quantum Systems, Freie Universit{\"a}t Berlin, 14195 Berlin, Germany}
\newcommand{\ethz}{Institute for Theoretical Physics, ETH Zurich, 8093 Zurich, Switzerland}
\newcommand{\iqoqi}{Institute for Quantum Optics and Quantum Information,
Austrian Academy of Sciences, Boltzmanngasse 3, A-1090 Vienna, Austria}
\newcommand{\perimeter}{Perimeter Institute for Theoretical Physics, Waterloo, ON N2L 2Y5, Canada}
\begin{document}
\title{Von~Neumann entropy from unitarity}
 
\author{Paul Boes}
\affiliation{\fu}
\author{Jens Eisert}
\affiliation{\fu}
\author{Rodrigo Gallego}
\affiliation{\fu}
\author{Markus P. M\"uller}
\affiliation{\iqoqi}
\affiliation{\perimeter}
\author{Henrik Wilming}
\affiliation{\fu}
\affiliation{\ethz}
\begin{abstract}	
	The von~Neumann entropy is a key quantity in quantum information theory and, roughly speaking, quantifies the 
	amount of quantum information contained in a state when many identical and independent (\emph{i.i.d.}) copies of the state are available,
	in a regime that is often referred to as being asymptotic.
	In this work, we provide a new operational characterization of the von~Neumann entropy which neither requires 
	an \emph{i.i.d.} limit nor any explicit randomness. 
	We do so by showing that the von~Neumann entropy fully characterizes single-shot state transitions in 
	unitary quantum mechanics, as long as one has access to a catalyst --- an ancillary system that can be re-used after the transition --- and an environment which has the effect of dephasing in a preferred basis.
	Building upon these insights, we formulate and provide evidence for the \emph{catalytic entropy conjecture}, 
	which 
	states that the above result holds true even in the absence of decoherence. If true, this 
	would prove an intimate connection between single-shot state transitions in unitary quantum mechanics and the von Neumann entropy. 
	Our results add significant support to recent insights that, contrary to common wisdom, the standard von~Neumann entropy also characterizes single-shot situations and opens up the possibility for operational single-shot interpretations of other standard entropic quantities.
	We discuss implications of these insights to readings of the third law of quantum thermodynamics and 
	hint at potentially profound implications to holography.
\end{abstract}
\maketitle
	In quantum information theory it is common to distinguish tasks as falling in one of two regimes: Either one deals with situations in which many identically and independently distributed (\emph{i.i.d.}) quantum systems 
%
%
appear. This regime is usually referred to as the \emph{asymptotic regime}. Such tasks include, for example, Schumacher compression \cite{Schumacher1995}, entanglement distillation \cite{Bennett1996} and quantum hypothesis testing \cite{Hiai1991,Ogawa2005}. Or, in sharp 
contrast, one deals with situations that only involve a single quantum system, the so-called \emph{single-shot} regime. Examples of protocols that have been analyzed in the single-shot setting include the decoupling of quantum systems~\cite{Majenz2017}, hypothesis testing \cite{Mosonyi2015}, and state transitions in quantum thermodynamics~\cite{Brandao2013}.
Common wisdom has it that different quantities characterize these two regimes. In the first regime, the von~Neumann entropy (vNE) or quantities directly related to it prevail, such as the standard quantum relative entropy or mutual information, while in the second regime quantities such as quantum R\'{e}nyi divergences~\cite{Datta2009,Berta2015,Mueller-Lennert13,Wilde13} and smoothed versions of the above \cite{Renner2005,Datta} become important.

This common wisdom is, however, recently being challenged~\cite{Mueller2016,Lostaglio2015b,Gallego2015,Wilming2017a,Mueller2017,EntanglementFluctuationTheorem}, 
as it has been shown that the vNE determines possible single-shot state transitions in quantum mechanics --- under unitary evolutions --- provided that three assumptions hold~\cite{Mueller2017}: i) one can prepare a suitable \emph{catalyst}, i.e.\ an auxiliary system that does not change its state during the process but might become correlated with the system on which the transition is performed; ii) one has access to an environment, or source of randomness, that is modelled as a large system in the maximally mixed state; iii) one has full control over system,  catalyst and the environment, in the sense that one can implement any unitary on the joint system. Now, while introducing a catalyst as described by i) is operationally justified since it can subsequently be reused to perform transitions on further new systems, assumption ii) assigns an undesirably special role to maximally mixed systems, while assumption iii)  is in conflict with the common experience that environments cannot practically be accessed with full degree of control.

In this work, we provide an operational characterization of the von~Neumann entropy in terms of single-shot state transitions that does without assumptions ii) and iii). This may be seen as remarkable that a characterization is possible without resorting to ii) and iii) whatsoever.
Instead, our characterization builds upon two natural classes of dynamics in quantum mechanics: controlled unitary evolution and uncontrolled decoherence to some given preferred basis. We also present applications of this characterization related to  notions of
cooling in quantum thermodynamics in a way as is usually discussed in the context of quantum readings of the \emph{third law of
thermodynamics} and discuss possible implications of our results for recent work on the decoupling of systems and the AdS/CFT correspondence
in the context of \emph{holography}. Finally, we formulate, and provide evidence for, a conjecture, which, if true, shows that the von~Neumann entropy \emph{can be derived directly from unitary quantum mechanics alone} as it fully characterizes catalytic, single-shot state transitions.  

\emph{Main result.} We will now present our main result and then discuss its implications. To state the result, let $\mc D$ be the quantum channel that decoheres a system in a given orthonormal basis $\{\ket{j}\}$ of its Hilbert space, according to
\begin{align}
	\mc D[\sigma] = \sum_j \langle j|\sigma|j\rangle \proj{j}.
	\nonumber
\end{align}
Density matrices diagonal in $\{|j\rangle\}$ will be called \emph{quasi-classical}. 
Our main result can be stated as follows. 
\begin{theorem}[Single-shot characterization of the von Neumann entropy]
\label{the:main}
Let $\rho$ and $\rho'$ be two density matrices of the same finite dimension and with different spectra. Then the following two statements are equivalent:
\begin{enumerate}[i)]\label{thm:mainresult}
	\item\label{cond:mainfirst} $S(\rho')>S(\rho)$ and $\mathrm{rank}(\rho')\geq \mathrm{rank}(\rho)$.
	\item\label{cond:mainsecond} There exists a finite-dimensional, quasi-classical density matrix $\sigma$ and a unitary $U$ such that
		\begin{align}
			\label{eq:transition}\tr_2\left[U(\rho\otimes \sigma ) U^\dagger\right] &= \rho',\\ 
			\label{eq:catalytic} \mc D\left[\tr_1\left[U(\rho\otimes \sigma) U^\dagger\right]\right] &= \sigma.
		\end{align}
		\end{enumerate}
\end{theorem}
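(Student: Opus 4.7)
I treat the two directions separately; (ii) $\Rightarrow$ (i) is a standard application of entropy and rank inequalities, while (i) $\Rightarrow$ (ii) is the existence direction and leverages prior catalysis results.

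\emph{Direction} (ii) $\Rightarrow$ (i). Let $\tau := \tr_1[U(\rho \otimes \sigma) U^\dagger]$. By unitary invariance and subadditivity of the von~Neumann entropy,
\begin{align}
S(\rho) + S(\sigma) = S\!\left(U(\rho \otimes \sigma) U^\dagger\right) \leq S(\rho') + S(\tau). \nonumber
\end{align}
The dephasing channel $\mc D$ is unital and hence entropy non-decreasing: $S(\tau) \leq S(\mc D[\tau]) = S(\sigma)$. Combining these yields $S(\rho) \leq S(\rho')$. Equality would require saturation of both subadditivity (forcing $U(\rho \otimes \sigma) U^\dagger = \rho' \otimes \tau$) and of the dephasing bound (forcing $\tau = \mc D[\tau] = \sigma$), which jointly make $\rho \otimes \sigma$ and $\rho' \otimes \sigma$ isospectral and contradict the hypothesis that the spectra differ. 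For the rank statement, $\mc D[\tau] = \sigma$ forces $\supp(\tau) \subseteq \supp(\sigma)$, so $\mathrm{rank}(\tau) \leq \mathrm{rank}(\sigma)$; combined with $\mathrm{rank}(\omega_{12}) \leq \mathrm{rank}(\omega_1)\mathrm{rank}(\omega_2)$ applied to the unitarily evolved product state, we obtain $\mathrm{rank}(\rho)\,\mathrm{rank}(\sigma) \leq \mathrm{rank}(\rho')\,\mathrm{rank}(\sigma)$ and hence $\mathrm{rank}(\rho') \geq \mathrm{rank}(\rho)$.

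\emph{Direction} (i) $\Rightarrow$ (ii). I invoke the main result of Ref.~\cite{Mueller2017}: under condition (i), there exist a density matrix $\eta$, an integer $m$, and a unitary $V$ on the tripartite system--catalyst--environment such that, with $\omega := \id_{d_m}/d_m$ denoting a maximally mixed environment, $V(\rho \otimes \eta \otimes \omega) V^\dagger$ has first-factor marginal $\rho'$ and second-factor marginal $\eta$. By conjugating $V$ with the basis change that diagonalizes $\eta$, I may assume $\eta$ is diagonal in the preferred basis $\{\ket{j}\}$. Denote by $\xi$ the environment marginal after this evolution.

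The new step is to flatten the diagonal of $\xi$ via a post-processing unitary on the environment alone. Let $U_\xi$ diagonalize $\xi$, let $F$ be the quantum Fourier transform on the environment space, and set $W := F\, U_\xi^\dagger$. Writing $\xi = U_\xi D U_\xi^\dagger$ with diagonal $D$ of trace one, a direct computation yields
\begin{align}
(W \xi W^\dagger)_{kk} = (F D F^\dagger)_{kk} = \sum_j |F_{kj}|^2 D_{jj} = \frac{1}{d_m}, \nonumber
\end{align}
so $W\xi W^\dagger$ has uniform diagonal. Define $U := (\id \otimes \id \otimes W)\, V$ and $\sigma := \eta \otimes \omega$. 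Then $\sigma$ is quasi-classical; the system marginal of $U(\rho \otimes \sigma) U^\dagger$ remains $\rho'$ since $W$ acts on the third factor only; and the catalyst marginal equals $\eta \otimes W \xi W^\dagger$, whose image under $\mc D$ is $\eta \otimes \omega = \sigma$, as required.

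The principal conceptual obstacle is to recognize how to trade Müller's maximally mixed environment against the weaker dephased-return condition on the catalyst. Once one sees that the Fourier transform flattens the diagonal of \emph{any} density matrix and can be applied locally on the environment factor without disturbing the system and catalyst marginals, the theorem reduces cleanly to Ref.~\cite{Mueller2017}.
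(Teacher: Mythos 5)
Your direction ii) $\Rightarrow$ i) is essentially correct and close in spirit to the paper's own argument (which runs the same subadditivity chain simultaneously for $S_0$ and $S=S_1$, using $\sigma'\succeq\mc D[\sigma']=\sigma$, and settles the equality case via the trumping relation). One step you leave implicit: from $U(\rho\otimes\sigma)U^\dagger=\rho'\otimes\sigma$ you still must argue that isospectrality of $\rho\otimes\sigma$ and $\rho'\otimes\sigma$ forces isospectrality of $\rho$ and $\rho'$, e.g.\ by cancelling $\tr[\sigma^\alpha]>0$ from $\tr[\rho^\alpha]\,\tr[\sigma^\alpha]=\tr[(\rho')^\alpha]\,\tr[\sigma^\alpha]$ for all $\alpha>0$. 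This is easily filled.

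The direction i) $\Rightarrow$ ii) has a genuine gap. Your catalyst is the pair $B\otimes E$, so condition \eqref{eq:catalytic} constrains the \emph{joint} marginal $\tr_A\bigl[V(\rho\otimes\eta\otimes\omega)V^\dagger\bigr]$ on $B\otimes E$. You assert that this marginal is the product $\eta\otimes\xi$, and hence becomes $\eta\otimes W\xi W^\dagger$ after the local unitary $W$ on $E$. Nothing guarantees this: a generic dilation $V$ of the correlating-catalytic majorization leaves $B$ and $E$ correlated (indeed, for the standard controlled-unitary dilation $\sum_i U_i^{(AB)}\otimes\proj{i}_E$ the $B\otimes E$ marginal is a correlated cq-state unless all branches $\tr_A[U_i(\rho\otimes\eta)U_i^\dagger]$ coincide). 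Your Fourier flattening only fixes the local diagonal of $E$, $(W\xi W^\dagger)_{kk}=1/d_m$; what \eqref{eq:catalytic} requires is that the diagonal $p(b,e)$ of the joint $B\otimes E$ state \emph{factorizes} as $\eta_b\cdot(1/d_m)$, and having the correct marginal diagonals $\sum_e p(b,e)=\eta_b$ and $\sum_b p(b,e)=1/d_m$ does not imply $p(b,e)=\eta_b/d_m$. Controlling exactly these correlations is the crux of the paper's proof: there the maximally mixed ancilla $R$ is attached as a dilation of the dephasing map acting on the \emph{system} $A$, via the controlled unitary $V_{A,R}=\sum_j\proj{j}_A\otimes(U_j)_R$ built from a unitary operator basis, and the explicit computation of $\tr_A[\,\cdot\,]$ (using $\tr_A[\proj{j}_A Y\proj{k}_A]=\delta_{jk}\bra{j}Y\ket{j}$ and $U_jU_j^\dagger=\id$) shows that $R$ ends the protocol exactly in $\one_d$ \emph{and uncorrelated} with $B$; the residual off-diagonal terms on the $B$ part are then removed by the external dephasing permitted in \eqref{eq:catalytic}. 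To repair your argument you would need to replace the generic $V$ by a construction with this kind of controlled structure, or otherwise prove that the diagonal $B$--$E$ correlations vanish, which they do not in general.
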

The proof is presented in Appendix~\ref{app:proofs}. Note first that the choice of basis $\{|j\rangle\}$ is irrelevant, since any basis change can be included in $U$. Furthermore, if one has $S(\rho')>S(\rho)$ but $\mathrm{rank}(\rho')<\mathrm{rank}(\rho)$, then by Theorem~\ref{thm:mainresult} the transition is not possible exactly. However, it can be done to arbitrary precision, since any state can be arbitrarily well approximated by a state with full rank. From a physical point of view, the condition on the rank is therefore not important.

To interpret this result, one can imagine a situation in which only a small region of space, say, the laboratory, can be controlled unitarily with high degree of precision while any system outside this region is decohered very quickly in some given basis. This is a common situation in current experimental devices. Given these constraints, the goal is to transform a quantum system from $\rho$ to $\rho'$ by acting unitarily on this system together with an ancillary system in a quasi-classical state that one can ``borrow'' from the environment so long as, upon being returned to the environment, it decoheres back to its initial state and can hence be used to aid further transitions. Then, Theorem~\ref{thm:mainresult} says that the vNE fully characterizes possible transitions in this natural setup (see Fig.~\ref{fig:comparison} for a comparison of results and settings).
\begin{figure}
\includegraphics[width=0.48\textwidth]{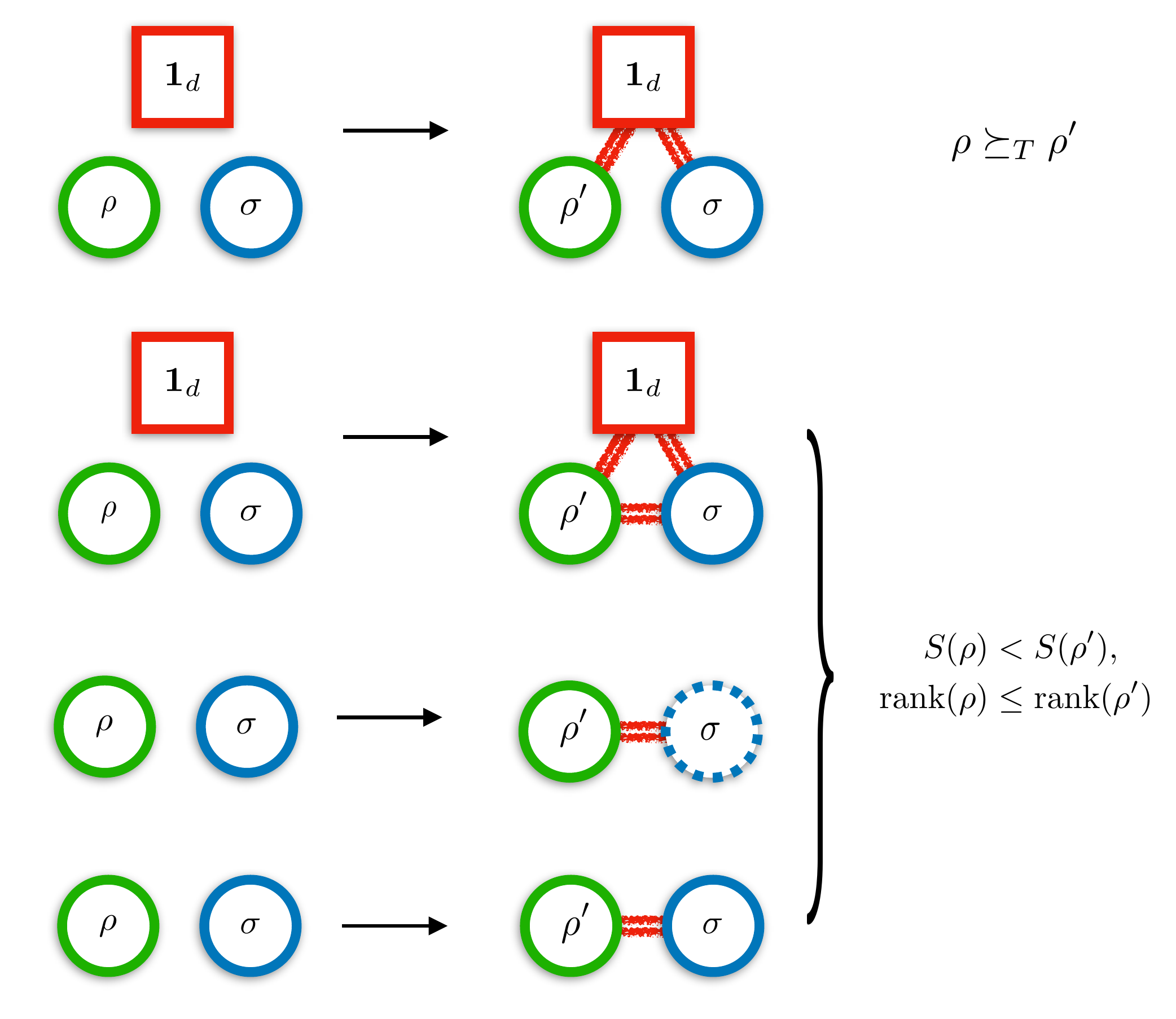}
\caption{Comparison of various settings and results. \emph{Top:} State-transitions implementable using a source of randomness and an uncorrelated catalyst $\sigma$ are characterized by the trumping relations. \emph{Middle Top:} State transitions allowing for source of randomness and a correlated catalyst are characterized by entropy and rank \cite{Mueller2017}. \emph{Middle Bottom:} By Theorem~\ref{thm:mainresult}, state transitions using a correlated catalyst and a dephasing environment that acts on the catalyst (dashed boundary) are also characterized by entropy and rank. \emph{Bottom:} State transitions using a correlated catalyst alone are characterized by entropy and rank. This is the content of Conjecture~\ref{conj:catalytic}.}
\label{fig:comparison}
\end{figure}

Note finally that, in general, the auxiliary system is clearly necessary to implement the transition $\rho\rightarrow\rho'$ since otherwise we would act unitarily on $\rho'$ and therefore could not change its spectrum. The same restriction would arise if we demanded that the auxiliary system is returned \emph{uncorrelated} from the system. Thus, $\sigma$ truly acts like a catalyst by enabling transitions that would otherwise be impossible and can, after decohering in the environment, catalyse further transitions $\rho\rightarrow \rho'$ on further independent copies of $\rho$. 
At the same time, the statement provides a new perspective to the crucial role of correlations between the system and the catalyst.

\emph{Applications to notions of cooling and the third law.} We now discuss an application of Theorem~\ref{thm:mainresult} to one of the 
key problems in quantum thermodynamics. Namely, we analyze how it can be used as a protocol for \emph{cooling to very low temperatures} 
beyond the \emph{i.i.d.} setting. This is a situation usually
captured in readings of the \emph{third law of thermodynamics} 
or \emph{Nernst's Unattainability Principle (UP)}, bounding achievable rates to cooling. Specifically,
in this context, we consider the reading of the problem of preparing systems in a state which is arbitrarily close to being pure. 
Let us for simplicity take as an initial system two uncorrelated qubits $\rho=\varrho \otimes \varrho$ with $S(\varrho)<1/2$
(even the generalization to other systems is obvious). 
Theorem~\ref{thm:mainresult} then implies that it is possible to implement a transition satisfying~\eqref{eq:transition} and \eqref{eq:catalytic} so that the final state is $\rho'= \varrho' \otimes \one_2$, where $\one_k$ represents a maximally mixed state of dimension $k$ and $\varrho$ is any full-rank state with $S(\varrho') = \epsilon$ for arbitrarily small $\epsilon > 0$, i.e.\ arbitrarily close, in trace distance, to a pure state. 
This is reminiscent of protocols of \emph{algorithmic cooling} \cite{Schulman1999,Schulman2005,Boykin2002,Raeisi2015} which take a large number $n$ of ``warm'' qubits $\varrho$ and distill from them $n_c=n(1-S(\varrho))$ ``cold'' qubits having each a smallest eigenvalue $\lambda_{\text{min}}=\mathcal{O}(\exp(-n))$ (see in particular Ref. 
\cite{Schulman1999}). 
The advantage of our protocol employing a catalyst is that we can obtain \emph{arbitrarily cold systems} using a small number of copies, $n=2$ in this case, in contrast to the asymptotic \emph{i.i.d.} setting considered in algorithmic cooling. 
Furthermore, the fact that the protocol of Theorem \ref{thm:mainresult} is catalytic allows one to repeat the protocol for $n/2$ copies of $\rho$ using a single ancillary system. 
Taking $S(\varrho) \approx 1/2$ we obtain $n_c\approx n/2$ qubits which are arbitrarily close to a pure state. 
This coincides with the bound given by algorithmic cooling which in this case is $n_c=n(1-S(\varrho)) \approx n/2$ and that is the ultimate bound for any entropy non-decreasing protocol. 
Hence, our protocol not only distills arbitrarily cold qubits with few copies, but also has an optimal efficiency ---in terms of the rate of almost pure qubits--- when applied sequentially in the asymptotic limit. 
At the same time, however,  our protocol establishes correlations among the cold qubits produced. 
Hence, although they can be used individually for further applications, it would be wrong to conclude that using our results one can prepare an arbitrary number $(\varrho')^{\otimes n}$ of uncorrelated quasi-pure states using the same catalyst over and over (see Appendix~\ref{app:cooling} for further discussion of this point). This again stresses the importance of correlations in the scheme.

The fact that one can produce systems in a state $\varrho'$ which is arbitrarily close to a pure state might, moreover, at first glance seem to be in contradiction with the third law of thermodynamics as formulated in the UP. The UP states that infinite time is required to cool down a system to its ground state (see, e.g., Refs. \cite{Levy2012,Scharlau2016,Masanes2017,Wilming2017} for recent approaches to quantum readings 
of the UP and their relation with pure state preparation). However, we note that preparing an arbitrarily pure $\varrho'$ requires also an arbitrarily large catalyst $\sigma$ and might also require a very large environment to implement the dephasing map $\mathcal{D}$, which in turn ensures that it cannot be prepared in finite time.
%
%
%

\emph{Relation to previous work.} Let us now briefly discuss the relation of our results to previous work. To begin with, we note that one can use previous results to fully characterize the possible state transitions $\rho \to \rho'$ for the special case in which the catalyst is constrained to be a maximally mixed state. Specifically, one can recast recent results \cite{Gour2015,Boes2018a} as the statement that there exist $d$ and $U$ such that
\begin{align}
 \tr_2 [U (\rho \otimes \one_d) U^{\dagger}]&=\rho' ,\\
\mathcal{D} [\tr_1 [U (\rho \otimes \one_d ) U^{\dagger}]]&=\one_d,
\end{align}
if and only if $\rho$ \emph{majorizes} $\rho'$, denoted by $\rho \succeq \rho'$ \cite{Gour2015}. Clearly, the above is a special case of Eqs.~\eqref{eq:transition} and \eqref{eq:catalytic}. Majorization captures the state transitions that are possible under random unitary evolution and hence the above establishes the intuitive result that every random unitary evolution can be implemented with a sufficiently large source of randomness without affecting the latter's state.  

To compare this result with Theorem~\ref{thm:mainresult} it should be noted that $\rho \succeq \rho'$ is, as a constraint, much stronger than $S(\rho') >S(\rho)$. 
Indeed one can see that R\'{e}nyi entropies $S_{\alpha}$, defined as
\begin{align}
S_\alpha(\rho) = \frac{1}{1-\alpha} \log \tr(\rho^\alpha) \quad (\alpha \in \mathbb{R}\backslash \{1\}),
\end{align}
cannot decrease for transitions $\rho \to \rho'$ with $\rho \succeq \rho'$, where the vNE is given by the particular case of $S \equiv S_1 := \lim_{\alpha \to 1} S_\alpha$.
The infinite set of conditions given by the R\'{e}nyi entropies
\begin{align}\label{eq:trumping}
S_{\alpha}(\rho') \geq S_{\alpha}(\rho) \: \: \forall \: \alpha \in \mathbb{R}
\end{align}
become both necessary and sufficient for the existence of a further auxiliary system $\sigma$ such that $\rho \otimes \sigma \succeq \rho' \otimes \sigma$ --- an important relation known as \emph{trumping}~\cite{Klimesh2007,Turgut2007} in quantum information theory. The trumping constraints lie, in strength, strictly between those imposed by majorization and the vnE alone.

Lastly, in Ref.\ \cite{Mueller2017} it is shown that by allowing for correlations between both systems it is possible to collapse the infinite set of conditions for the trumping conditions to essentially the vNE. In particular, it is shown that condition $i$) in Theorem~\ref{thm:mainresult} is equivalent to the existence of $\sigma$ and $U$ so that $\rho \otimes \sigma \succeq \rho'\sigma$, where $\rho'\sigma$ denotes a density matrix such that $\tr_2(\rho'\sigma)=\rho'$ and $\tr_1(\rho'\sigma)=\sigma$. This statement differs from Theorem~\ref{thm:mainresult} in that one needs to make use of a maximally mixed system over which one has full unitary control, while Theorem~\ref{thm:mainresult} includes external randomness only in the form of an uncontrolled dephasing map (see Fig.~\ref{fig:comparison} for comparison). 

\emph{Catalytic entropy conjecture.} The discussion above raises the natural question whether an external environment, being modelled as a maximally mixed state or a dephasing map as above, is at all necessary to implement all transitions which do not decrease the vNE. This is what we capture in the following conjecture.

\begin{conjecture}[Catalytic entropy conjecture]\label{conj:catalytic} 
Let $\rho$ and $\rho'$ be two density matrices of the same finite dimension and with different spectra. Then the following two statements are equivalent:
\begin{enumerate}[(a)]
	\item $S(\rho')>S(\rho)$ and $\mathrm{rank}(\rho')\geq \mathrm{rank}(\rho)$.\label{cond:conjecture_1}
	\item \label{cond:conjecture_2} There exists a density matrix $\sigma$ and a unitary $U$ such that
		\begin{align}\label{eq:conjecture_transition}
			\tr_2\left[U(\rho\otimes \sigma ) U^\dagger \right] = \rho'\ \text{and}\  \tr_1\left[U(\rho\otimes \sigma )  U^\dagger\right] = \sigma. 
		\end{align}
\end{enumerate}
\end{conjecture}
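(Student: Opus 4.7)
The direction (b)$\Rightarrow$(a) follows from standard arguments. Because $U$ is unitary, $S(U(\rho\otimes\sigma)U^\dagger)=S(\rho)+S(\sigma)$, and subadditivity applied to the output marginals $\rho'$ and $\sigma$ gives $S(\rho)+S(\sigma)\leq S(\rho')+S(\sigma)$, hence $S(\rho')\geq S(\rho)$. Equality would force $U(\rho\otimes\sigma)U^\dagger=\rho'\otimes\sigma$, so $\rho$ and $\rho'$ would be unitarily equivalent and thus isospectral, contradicting the assumption that their spectra differ; hence $S(\rho')>S(\rho)$. For the rank, I would use that the support of any bipartite density matrix is contained in the tensor product of the supports of its marginals, which gives $\mathrm{rank}(\rho)\,\mathrm{rank}(\sigma)=\mathrm{rank}(\rho\otimes\sigma)\leq\mathrm{rank}(\rho')\,\mathrm{rank}(\sigma)$, and therefore $\mathrm{rank}(\rho')\geq\mathrm{rank}(\rho)$.

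The non-trivial direction (a)$\Rightarrow$(b) is the heart of the conjecture. My plan would be to take as a starting point Theorem~\ref{thm:mainresult}, which already supplies a quasi-classical catalyst $\sigma_0$ and a unitary $U_0$ implementing the transition provided the post-unitary catalyst marginal $\tau:=\tr_1\bigl[U_0(\rho\otimes\sigma_0)U_0^\dagger\bigr]$ is dephased back to $\sigma_0$. The remaining task is to absorb the dephasing map $\mathcal{D}$ into a larger catalyst $\sigma$ and a larger unitary $U$, so that the off-diagonal coherences of $\tau$ are unitarily ``hidden'' inside $\sigma$ itself rather than leaked into an external environment.

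A natural first attempt dilates $\mathcal{D}$ to a unitary $W$ coupling the second system to a fresh ancilla $E$ in a fixed pure state, and then appends $E$ to the catalyst. This reproduces the correct marginal $\rho'$ on the system, but leaves $E$ correlated with $\sigma_0$, so the extended catalyst does not return to its initial product form. A more promising route is therefore to seek a catalyst that is self-correcting: the coherences produced by $U_0$ must be reabsorbed by an internal unitary on $\sigma$ alone. Concrete avenues worth exploring are (i) catalysts with block-diagonal structure indexed by an orbit of a finite symmetry group, allowing a twirling-type averaging step to be performed unitarily on $\sigma$ rather than by classical randomness; (ii) exact analogues of embezzling states that bootstrap arbitrary entropic trade-offs from a single large catalytic resource; and (iii) a direct application of quantum marginal theorems to show that, whenever the entropic and rank conditions in (a) hold, there exists a joint state $\omega$ with marginals $\rho',\sigma$ and $\spec(\omega)=\spec(\rho\otimes\sigma)$, from which $U$ is read off via a Schmidt-style identification between $\omega$ and $\rho\otimes\sigma$.

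The decisive obstacle, and the reason the statement is only a conjecture, is that full coherent control demands that \emph{every} eigenvector of $\tau$, not just its diagonal entries, be matched to a corresponding eigenvector of $\sigma$. In Theorem~\ref{thm:mainresult} the dephasing constrains only the diagonal of $\tau$ and leaves the eigenbasis free; in the setting of Conjecture~\ref{conj:catalytic} this freedom is lost, and one must engineer $\sigma$ whose spectral decomposition is simultaneously compatible with the prescribed marginals of some joint state $\omega$ and with the rigid spectral constraint $\spec(\omega)=\spec(\rho\otimes\sigma)$. Showing that such a $\sigma$ always exists under the hypotheses of (a) is the key technical hurdle any proof will have to overcome.
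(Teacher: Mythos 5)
The statement you were given is, as its name indicates, a conjecture: the paper itself does not prove the implication (a)$\Rightarrow$(b), and your proposal correctly refrains from claiming to. Your argument for (b)$\Rightarrow$(a) is correct and coincides with the paper's, which dispatches this direction in one sentence via subadditivity of $S$ and of $S_0$; your additional step deriving strictness of $S(\rho')>S(\rho)$ from the different-spectra hypothesis (saturation of subadditivity forces the output to be the product $\rho'\otimes\sigma$, hence $\rho$ and $\rho'$ isospectral, a contradiction) is the right way to fill in that sentence, and your rank argument is the $S_0$ statement in disguise.

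On the hard direction, your plan of starting from Theorem~\ref{thm:mainresult} and absorbing the dephasing into the catalyst is precisely the route the paper pushes as far as it can. Its Lemma~\ref{lemma:weaksolution} dilates $\mc D$ not with a fresh \emph{pure} ancilla (which, as you note, ends up correlated with the catalyst and ruins exact restoration) but with a \emph{maximally mixed} register acted on by a controlled unitary-operator basis; the operator-basis orthogonality relation guarantees that this register stays maximally mixed and uncorrelated with the dephased party. The price is that the register itself cannot be restored and must be appended to the \emph{system} side, which is why one only obtains the conjecture for the special pairs $\rho\otimes\one_d\to\rho'\otimes\one_d$ rather than in general. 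The obstruction you identify at the end --- that one must restore the entire eigenbasis of the post-unitary catalyst marginal, not merely its diagonal in a preferred basis --- is indeed the genuine open hurdle; the paper's remaining evidence (Corollary~\ref{cor:exclude}, ruling out any continuous additive monotone other than a linear function of $S$) is indirect and does not close it. In short: your easy direction is a correct proof matching the paper's, and your hard direction is a sensible research plan, not a proof --- but the paper has none either.
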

The implication \eqref{cond:conjecture_2} $\Rightarrow$ \eqref{cond:conjecture_1} follows directly from the sub-additivity of the vNE and $S_0$, hence the real content of the conjecture is that \eqref{cond:conjecture_1} are the only constraints on transitions of the form \eqref{cond:conjecture_2}. If true, this conjecture would therefore imply that the von~Neumann entropy characterizes catalytic state-transitions in unitary quantum mechanics in full generality, without the need to introduce noise or \emph{i.i.d.} limits (see Fig.~\ref{fig:comparison}). 

Let us now discuss why we believe this conjecture to be true. To begin with, it is easy to generate counterexamples that rule out the possibility that transitions of the form \eqref{cond:conjecture_2} are constrained by the aforementioned trumping relations. In Fig.~\ref{fig:example} we provide such a counterexample together with a method to construct further examples. But in fact, we can rule out more general constraints than~\eqref{eq:trumping} with the help of the following lemma.
\begin{figure}[t]
\begin{tabular}{ c  c | c}
	$[\gamma_{A,B}]_{0,0}$ & $[\gamma_{A,B}]_{0,1}$ & $[\gamma_{A}]_0$\\
	$[\gamma_{A,B}]_{1,0}$ & $[\gamma_{A,B}]_{1,1}$ & $[\gamma_{A}]_1$\\
	$[\gamma_{A,B}]_{2,0}$ & $[\gamma_{A,B}]_{2,1}$ & $[\gamma_{A}]_2$\\
	\hline
	$[\gamma_{B}]_0$ &$[\gamma_{B}]_1$&   
\end{tabular}
\quad ; \quad
\begin{tabular}{ c  c | c}
	\red{0} & 0 &0\\
	\red{$\frac{2}{6}$} & \blue{$\frac{1}{6}$} &$\frac{1}{2}$\\
	$\frac{2}{6}$ & \blue{$\frac{1}{6}$}  & $\frac{1}{2}$\\
	\hline
	$\frac{2}{3}$ &$\frac{1}{3}$&   
\end{tabular}
\quad$\rightarrow$ \quad
\begin{tabular}{ c  c | c}
	\blue{$\frac{1}{6}$}& 0 &$\frac{1}{6}$\\
	\blue{$\frac{1}{6}$} & \red{0} &$\frac{1}{6}$\\
	$\frac{2}{6}$ &  \red{$\frac{2}{6}$} & $\frac{2}{3}$\\
	\hline
	$\frac{2}{3}$ &$\frac{1}{3}$&   
\end{tabular}
\caption{Given an arbitrary bipartite state on $A,B$ denoted $\gamma_{A,B}$, the table at the left-hand side indicates the meaning of each entry, where $[\gamma_{A,B}]_{i,j}\coloneqq \bra{i,j}\gamma_{A,B} \ket{i,j}$ on a given computational basis of $AB$. The two tables at the right hand side indicate a particular transition of the form $\rho \otimes \sigma \to U (\rho \otimes \sigma)U\coloneqq \rho'\sigma$. In this case we take $\rho$ and $\sigma$ to be of dimension 3 and 2 respectively, and both diagonal in the computational basis. The unitary $U$ is simply a classical permutation which swaps the red entries with the blue entries. Note that the final state satisfies $\tr_{2}(\rho'\sigma)=\sigma$ since the bottom row remains unchanged, as demanded by condition \eqref{cond:conjecture_2}. The column sums on the right-hand side of each table represent $\rho={\rm diag}(0,1/2,1/2)$ and $\rho'={\rm diag}(1/6,1/6,2/3)$. Since $S_\infty(\rho)$ is determined by the largest eigenvalue of $\rho$, this example realizes a catalytic transition $\rho\to\rho'$ with $S_\infty(\rho) > S_\infty(\rho')$ and hence excludes the possibility that catalytic state transitions are constrained by the trumping relations.} 
\label{fig:example}
\end{figure}

\begin{lemma}[Weak solution to catalytic entropy conjecture]\label{lemma:weaksolution} 
Let $\rho$ and $\rho'$ be two density matrices of the same, finite dimension and with different spectra. Then the following two statements are equivalent:
\begin{enumerate}[(I)]
	\item \label{cond:weak_1} $S(\rho')>S(\rho)$ and $\mathrm{rank}(\rho')\geq \mathrm{rank}(\rho)$.
	\item \label{cond:weak_2} There exists a density matrix $\sigma$, a unitary $U$ and some finite dimension $d$ such that
		\begin{align}
			&\tr_2\left[U(\rho\otimes \one_d\otimes \sigma)  U^\dagger\right] = \rho'\otimes \one_d,\\  &\tr_1\left[U(\rho\otimes \one_d
			\otimes\sigma ) U^\dagger\right] = \sigma. 
		\end{align}
\end{enumerate}
\end{lemma}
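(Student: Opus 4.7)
My plan is to reduce the hard direction (\ref{cond:weak_1}) $\Rightarrow$ (\ref{cond:weak_2}) to Theorem~\ref{thm:mainresult}, by using the maximally mixed ancilla $\one_d$ to unitarily simulate the dephasing map $\mcD$. The converse direction (\ref{cond:weak_2}) $\Rightarrow$ (\ref{cond:weak_1}) I would handle exactly as indicated for Conjecture~\ref{conj:catalytic}: applying subadditivity of the von~Neumann entropy and of $S_0$ to the joint state $\eta := U(\rho \otimes \one_d \otimes \sigma) U^\dagger$ yields $S(\rho') \geq S(\rho)$ and $\mathrm{rank}(\rho') \geq \mathrm{rank}(\rho)$, with strictness $S(\rho') > S(\rho)$ following from the different-spectra hypothesis just as in the proof of the main theorem.

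For the hard direction, I would first invoke Theorem~\ref{thm:mainresult} to obtain a quasi-classical catalyst $\sigma_0 = \sum_j p_j \proj{j}_C$ and a unitary $U_0$ on the system-catalyst pair $SC$ such that $\omega := U_0(\rho \otimes \sigma_0) U_0^\dagger$ satisfies $\tr_C \omega = \rho'$ and $\mcD_C[\tr_S \omega] = \sigma_0$, where $\mcD_C$ denotes dephasing in the preferred basis of $C$. The key step is then to realize $\mcD_C$ as a random-unitary channel that is unitarily dilated by a maximally mixed ancilla: with $n := \dim C$ and $Z := \sum_j \e^{2\pi\ii j / n} \proj{j}$, one has the standard decomposition $\mcD_C[\tau] = \tfrac{1}{n}\sum_{k=0}^{n-1} Z^k \tau Z^{-k}$, and the controlled gate $V_{CR} := \sum_k Z^k_C \otimes \proj{k}_R$ acting on $\tau \otimes \one_d$ (with $d := n$) implements $\mcD_C$ on $C$ while leaving the $R$-marginal maximally mixed.

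Setting $\sigma := \sigma_0$ and $U := V_{CR}(U_0 \otimes \id_R)$, I would then expand $\eta = \tfrac{1}{n}\sum_k \omega_k \otimes \proj{k}_R$ with $\omega_k := (\id_S \otimes Z^k_C)\,\omega\,(\id_S \otimes Z^{-k}_C)$ and verify the two required marginals. Trace cyclicity on $C$ gives $\tr_C[\omega_k] = \tr_C[\omega] = \rho'$ for every $k$, so that $\tr_C[\eta] = \rho' \otimes \one_d$; meanwhile, summing $\tr_S[\omega_k]$ over $k$ reconstructs exactly $\mcD_C[\tr_S \omega] = \sigma_0 = \sigma$. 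The main obstacle I anticipate is conceptual rather than technical: one must arrange that the $R$-register emerges in an actual \emph{product} with $\rho'$, not merely with a maximally mixed marginal on its own. This decoupling relies essentially on $\mcD_C$ being trace-preserving, so that $\rho' = \tr_C[\omega]$ is left unchanged conditional on every classical label $k$ in $R$, as reflected in the computation above.
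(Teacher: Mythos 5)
Your proof is correct and takes essentially the same route as the paper's: both directions reduce to Theorem~\ref{thm:mainresult}, and the hard direction is obtained by unitarily dilating the dephasing of the catalyst with a maximally mixed ancilla that becomes the $\one_d$ register of the lemma. The only difference is packaging --- the paper re-enters the proof of Theorem~\ref{thm:mainresult} and appends a second dilation register to dephase the nontrivial part of the catalyst (re-running the decoupling computation with relabelled systems), whereas you treat Theorem~\ref{thm:mainresult} as a black box and use the controlled-$Z^k$ random-unitary decomposition of $\mc D$, which makes the decoupling of $R$ from $\rho'$ immediate since $\tr_C[\omega_k]=\rho'$ for every $k$.
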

This result, which is proven in Appendix~\ref{app:proofs}, supports the conjecture in two ways: Firstly, it shows that the catalytic entropy conjecture is true up to an additional maximally mixed system that remains uncorrelated to the system of interest, but not to the catalyst. 
It can also be seen as an instance of the full catalytic entropy conjecture for the specific states $\rho\otimes \one_d$ and $\rho'\otimes \one_d$.
Secondly, and more importantly, it allows us to prove the following corollary:

\begin{corollary}[Characterization of entropy functions] \label{cor:exclude}
Let $f$ be a function from the set of density matrices to the real numbers such that for every transition of the form \eqref{cond:conjecture_2} between full-rank density matrices, $f(\rho') > f(\rho)$. Then exactly one of the following two statements is true:
\begin{enumerate}
	\item $S(\rho') > S(\rho) \Leftrightarrow f(\rho') > f(\rho)$,
    \item $f$ is non-additive or discontinuous.\label{cond:nonadditive}
\end{enumerate} 
\end{corollary}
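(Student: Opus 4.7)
The substantive content of the corollary is the contrapositive of statement~2 implying statement~1: if $f$ is both additive and continuous, then the equivalence $S(\rho')>S(\rho) \Leftrightarrow f(\rho')>f(\rho)$ holds on full-rank density matrices of the same dimension. The plan is to use Lemma~\ref{lemma:weaksolution} to convert each vNE comparison into an explicit catalytic transition of the form~\eqref{cond:conjecture_2} between tensor-padded states, apply the monotonicity hypothesis on $f$, and then peel off the auxiliary maximally mixed factor using additivity.

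For the forward implication, fix full-rank $\rho,\rho'$ with $S(\rho')>S(\rho)$. The rank condition in Lemma~\ref{lemma:weaksolution} is trivial, so there exist a catalyst $\sigma$, a dimension $d$ and a unitary $U$ implementing a transition of the form~\eqref{cond:conjecture_2} from $\rho\otimes\mathbf{1}_d$ to $\rho'\otimes\mathbf{1}_d$ with catalyst $\sigma$. Both padded states are full rank and, by additivity of the vNE, have distinct spectra. The monotonicity hypothesis on $f$ then yields $f(\rho'\otimes\mathbf{1}_d)>f(\rho\otimes\mathbf{1}_d)$, and additivity of $f$ cancels the common $f(\mathbf{1}_d)$ to give $f(\rho')>f(\rho)$.

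For the converse implication, assume $f(\rho')>f(\rho)$ with $\rho,\rho'$ full rank of the same dimension and $S(\rho')\leq S(\rho)$. If $S(\rho')<S(\rho)$, the forward argument run with the roles of $\rho$ and $\rho'$ exchanged yields $f(\rho)>f(\rho')$, a contradiction. The borderline case $S(\rho')=S(\rho)$ is handled by continuity: take $\rho_n := (1-1/n)\rho + (1/n)\mathbf{1}/\dim\rho$, which is full rank, of the same dimension as $\rho$, converges to $\rho$, and satisfies $S(\rho_n)>S(\rho)=S(\rho')$ strictly for every $n$, provided $\rho\neq \mathbf{1}/d$. The residual case $\rho=\mathbf{1}/d$ combined with $S(\rho')=S(\rho)$ would force $\rho'=\mathbf{1}/d$ as well, so $\rho$ and $\rho'$ share the same spectrum and fall outside the scope of the corollary by the different-spectra proviso inherited from Conjecture~\ref{conj:catalytic}. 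Applying the already-established forward direction to each pair $(\rho',\rho_n)$ gives $f(\rho_n)>f(\rho')$, and continuity of $f$ upgrades this to $f(\rho)\geq f(\rho')$, contradicting the strict inequality.

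The only non-routine step is the equal-entropy branch of the converse, where one must supply a perturbation that simultaneously preserves full rank, keeps the dimension fixed, and strictly raises the entropy; the explicit convex combination with the maximally mixed state accomplishes all three and the maximally mixed pathology is eliminated by the different-spectra convention. Everything else is a direct combination of Lemma~\ref{lemma:weaksolution} with additivity of $f$ and the monotonicity hypothesis, so the main obstacle is simply ensuring that the approximating sequence remains inside the regime where Lemma~\ref{lemma:weaksolution} is applicable.
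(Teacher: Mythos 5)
Your proof is correct, and it takes a more direct route than the paper's. The paper (Appendix~\ref{appSunique}) derives Corollary~\ref{cor:exclude} from a stronger structural lemma: any continuous, additive monotone must equal $a\,S(\rho)+b_n$ with $a\geq 0$, proved by showing $f$ is constant on level sets of the negentropy $I(\rho)=\log n-S(\rho)$ and then solving a Cauchy-type functional equation for the induced function $g$ on $[0,\infty)$. You instead prove the biconditional of statement~1 directly: Lemma~\ref{lemma:weaksolution} supplies a transition of the form~\eqref{cond:conjecture_2} between the full-rank padded states $\rho\otimes\one_d$ and $\rho'\otimes\one_d$, the monotonicity hypothesis gives the strict inequality there, and additivity cancels $f(\one_d)$. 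Your handling of the equal-entropy boundary by mixing with the maximally mixed state plays the same role as the paper's $\sigma_\epsilon$ approximants, and is equally valid ($S(\rho_n)>S(\rho)$ already follows from concavity together with $S(\rho)<\log d$). What your route buys is brevity and the avoidance of the functional-equation machinery; what it gives up is the explicit characterization $f=aS+b_n$, which the paper uses to make further claims (e.g.\ that the R\'enyi entropies $S_\alpha$, $\alpha\neq 0,1$, cannot be monotones). Two minor remarks: the residual case $\rho=\one/d$ needs no appeal to a ``different-spectra proviso,'' since there $\rho'=\rho$ and the hypothesis $f(\rho')>f(\rho)$ is vacuously unsatisfiable; and, like the paper's own proof, your argument establishes ``at least one'' rather than literally ``exactly one'' of the two alternatives, which is all the proof in the paper delivers as well.
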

Corollary~\ref{cor:exclude} follows from Lemma~\ref{lemma:weaksolution} by showing that any such function $f$ has to be a linear function of the vNE (see Appendix~\ref{appSunique} for a proof). Thus, for full-rank density matrices, if Conjecture~\ref{conj:catalytic} was false, any additional constraint on transitions of the form~\eqref{cond:conjecture_2} would have to be given by exotic entropic functions that are not additive or are discontinuous. For instance, this corollary immediately implies that none of the functions $S_\alpha, \:\alpha\neq 0,1$, can be a monotone for transitions of the form \eqref{cond:conjecture_2} since they all satisfy none of the two conditions in the corollary.

\emph{Discussion and open questions.} 
	In this work, we have provided a new operational characterization of von~Neumann entropy which adds significant support to recent proposals that, contrary to common wisdom, the standard von Neumann entropy characterizes not only the \emph{i.i.d.} limit but also single-shot protocols in quantum information theory. We have done so by showing that the von~Neumann entropy fully determines the possibility of single-shot state transitions in unitary quantum mechanics, as long as one has access to a catalyst and environmental dephasing in a preferred basis. Furthermore, we have formulated the \emph{catalytic entropy conjecture} which essentially states that the above result holds true even in the absence of decoherence. We have also presented evidence for the truth of this conjecture by ruling out alternatives.

Our work suggests that there might be a novel, hitherto unexplored sector of quantum information theory in which operations on \emph{single} copies of a quantum state are characterized directly in terms of standard entropic quantities like vNE. For example, one may ask what happens in Theorem~\ref{the:main} or Conjecture~\ref{conj:catalytic} if we introduce another reference system $R$ that is initially correlated or entangled with the system $1$ (let us denote system $1$ by $A$ for now, and let $C$ be the catalytic system $2$). Applying a unitary $U_{A,C}$ on the system and catalyst, denoting the new states of the systems by $R'$, $A'$ and $C'$, we obtain $R'=R$, by construction $C'=C$ and $S(A')\geq S(A)$ since $A$ becomes correlated with $C$. Furthermore, the mutual information $I(R:A)=S(R)+S(A)-S(R, A)$ satisfies $I(R':A')\leq I(R:A)$. Are these necessary conditions also \emph{sufficient} for the existence of a transformation of that form --- in particular, can $A$ retain almost all of its correlations with $R$ under correlating-catalytic transformations? A positive answer to this or other similar questions would yield a new single-shot interpretation of the standard mutual information which could potentially be useful in the context of \emph{decoupling}~\cite{Horodecki2005,HaydenTutorial2011,Dupuis2014,Majenz2017} or merging of quantum states.
    
 The results also hint at the insight that entanglement in single many-body systems can well be captured in terms of the von-Neumann entropy. Ideas on \emph{single-copy entanglement} have been considered in situations where each 
specimen consists of a many-body system, already naturally featuring asymptotically many constituents \cite{PhysRevA.72.042112}. 
Then it can be unreasonable to capture entanglement of subsystems in yet another
asymptotic limit of many copies of identical quantum many-body systems. The results laid out here give substance to 
the intuition that even in single specimens
of quantum many-body systems,  entanglement can in this context be quantified in terms of the familiar von-Neumann entanglement entropy.
    
Results of this kind would also have implications in the context of \emph{holographic approaches} 
to quantum gravity, as in the AdS/CFT correspondence (see, for example, Refs.~\cite{Susskind1995,Maldacena1999,Ryu2006,Hayden2013,Czech2015,Lashkari2015,Casini2017,Jahn}). In these approaches, standard von Neumann (entanglement) entropies of boundary regions turn out to correspond to geometric quantities of a dual gravity theory in the bulk. 
In fact, it is exactly the mutual information that we have just discussed which is believed to be directly related to geometric quantities like area also in other (non-AdS/CFT) approaches to emergent spacetime~\cite{Cao2017}. 
To shed some light on this correspondence, it is therefore natural to consider operational interpretations of entropy in the boundary theory, and to ``dualize'' them to obtain corresponding interpretations of geometric quantities in the bulk. 
A difficulty in doing so, however, is that the protocols on the boundary theory either involve many copies of the state (which seems unphysical given that there is a unique spacetime), or they lead to quantification in terms of single-shot entropies (see, e.g., Ref.~\cite{Czech2015}) which do not always have a direct dual interpretation. 
The proven and conjectured results of this paper could therefore resolve this difficulty, by supplying direct single-shot interpretation of standard entropic quantities which might ultimately shed some light on the operational basis of geometric quantities. It is the hope that the present work stimulates
such endeavors.

\emph{Acknowledgements.} We acknowledge funding from DFG (GA 2184/2-1, CRC 183, EI 519/14-1, EI 519/9-1, FOR 2724), the ERC (TAQ) and the Studienstiftung des deutschen Volkes. HW further acknowledges contributions from the Swiss National Science Foundation via the NCCR QSIT as well as project No. 200020\_165843.
This research was supported in part by Perimeter Institute for Theoretical Physics. Research at Perimeter Institute is supported by the Government of Canada through the Department of Innovation, Science and Economic Development Canada and by the Province of Ontario through the Ministry of Research, Innovation and Science.

\bibliographystyle{apsrev4-1}

%

\appendix

\section{Proof of Theorem~\ref{thm:mainresult} and Lemma~\ref{lemma:weaksolution}} \label{app:proofs}

In this section we prove Theorem~\ref{thm:mainresult} and Lemma~\ref{lemma:weaksolution}. 
The proofs of both results rely on the following recent result from Ref.~\cite{Mueller2017}. 
\begin{theorem}[Correlating-catalytic majorization~\cite{Mueller2017}]\label{thm:markus} 
	Let $\rho,\rho'$ be two density matrices on the same, finite-dimensional Hilbert space $\mc H_A$ such that $S(\rho)< S(\rho')$ and $\mathrm{rank}(\rho) \leq \mathrm{rank}(\rho')$. Then there exists a density matrix $\tau$ on a finite-dimensional Hilbert space $\mc H_B$ and a bipartite density matrix $\rho'\tau$ on $\mc H_A \otimes \mc H_B$ such that
	\begin{align*}
		\rho \otimes \tau \succeq \rho'\tau, \quad \tr_B[\rho'\tau]=\rho',\quad \tr_A[\rho'\tau] = \tau. 
	\end{align*}
\end{theorem}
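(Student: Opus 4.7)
The plan is to prove this result by an explicit catalyst construction together with a quantitative spectral analysis that exploits the strict entropy inequality $S(\rho)<S(\rho')$. First I would reduce to the classical case: since the hypotheses and conclusion are invariant under $\rho\mapsto V\rho V^\dagger$ and $\rho'\mapsto W\rho' W^\dagger$ for unitaries $V,W$, and since majorization depends only on spectra, I may assume $\rho$ and $\rho'$ are diagonal in a common computational basis with probability vectors $p$ and $q$ respectively.

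Second, I would attempt the following ``chain'' catalyst on $n-1$ copies of $\mc{H}_A$, for $n$ chosen sufficiently large:
\begin{align*}
\tau = \frac{1}{n}\sum_{k=0}^{n-1}\rho'^{\otimes k}\otimes\rho^{\otimes(n-1-k)},
\end{align*}
together with the candidate joint state
\begin{align*}
\rho'\tau = \frac{1}{n}\sum_{k=0}^{n-1}\rho'^{\otimes(k+1)}\otimes\rho^{\otimes(n-1-k)}.
\end{align*}
A direct computation verifies $\tr_B[\rho'\tau]=\rho'$ and $\tr_A[\rho'\tau]=\tau$, so the marginal conditions hold automatically by construction. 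Moreover, pairing up the matching terms of $\rho\otimes\tau$ and $\rho'\tau$ yields
\begin{align*}
\rho\otimes\tau - \rho'\tau = \tfrac{1}{n}\bigl(\rho^{\otimes n} - \rho'^{\otimes n}\bigr),
\end{align*}
so the entire content of the theorem reduces to establishing the spectral majorization $\rho\otimes\tau\succeq\rho'\tau$ for some large enough $n$.

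Third, I would establish this majorization via the method of types. Small hand calculations confirm that for $n=2$ majorization can fail when $p\not\succeq q$, so one genuinely needs $n$ large. The plan is to invoke asymptotic equipartition: for large $n$ the eigenvalues of each chain component $\rho^{\otimes j}\otimes\rho'^{\otimes(n-1-j)}$ concentrate exponentially around values determined by its type, and the strict gap $S(\rho')>S(\rho)$ ensures that the ``$\rho^{\otimes n}$-bump'' added to $\rho\otimes\tau$ is more peaked than the ``$\rho'^{\otimes n}$-bump'' it replaces. Combined with the identical bulk contributions, this should yield term-by-term dominance of the sorted partial sums once $n$ is taken large relative to $(\log d)/(S(\rho')-S(\rho))$, with $d=\dim\mc{H}_A$. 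The rank condition $\mathrm{rank}(\rho)\le\mathrm{rank}(\rho')$ enters at the bottom of the Lorenz curve, ensuring that $\mathrm{spec}(\rho'\tau)$ does not spill over onto coordinates where $\mathrm{spec}(\rho\otimes\tau)$ vanishes.

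The main obstacle is precisely this last step: converting the asymptotic, AEP-level intuition into rigorous \emph{uniform} control over all $d^n$ partial sums of the sorted spectra, at a single concrete finite $n$. Majorization is strictly stronger than an entropy inequality and is not stable under addition of arbitrary low-entropy states, so one must carefully track both typical and atypical contributions using large-deviation bounds tuned to the entropy gap. I expect the argument will require either a delicate method-of-types analysis with explicit error control, or a more clever recursive catalyst (for instance an iterated chain-doubling scheme) that trades a more intricate construction for a cleaner spectral analysis. Once the spectral majorization is in hand, the construction above directly provides the required bipartite density matrix $\rho'\tau$, completing the proof.
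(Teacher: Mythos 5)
First, a point of reference: the paper does not prove this theorem at all --- it is imported verbatim from Ref.~\cite{Mueller2017} and used as an ingredient in Appendix~\ref{app:proofs} --- so your proposal has to be judged as a stand-alone proof of the cited result rather than against an in-paper argument.

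Your marginal bookkeeping is correct and the chain catalyst is indeed the right skeleton, but two steps break. First, the identity $\rho\otimes\tau-\rho'\tau=\tfrac{1}{n}\bigl(\rho^{\otimes n}-\rho'^{\otimes n}\bigr)$ is false: in $\rho\otimes\tau$ the distinguished factor always sits on $\mc H_A$, whereas the telescoping identity for $\rho^{\otimes n}-\rho'^{\otimes n}$ needs the changed factor to move through the slots; moreover majorization is a statement about spectra, not about the difference operator, and since the chain terms are not mutually orthogonal the spectrum of these mixtures is not the multiset union of the block spectra --- to even get the ``replace the $\rho'^{\otimes n}$ bump by the $\rho^{\otimes n}$ bump'' picture you must tensor an orthogonal counter register $\ketbra{k}{k}$ onto the catalyst. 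Second, and fatally, even with the counter the reduction cannot be closed by AEP: the initial and final spectra then share the blocks $\rho^{\otimes(n-j)}\otimes\rho'^{\otimes j}$, $j=1,\dots,n-1$, and differ by $\rho^{\otimes n}$ versus $\rho'^{\otimes n}$, so the first Ky~Fan inequality requires $\tfrac{1}{n}\lambda_{\max}(\rho)\lambda_{\max}(\rho')^{n-1}\geq\tfrac{1}{n}\lambda_{\max}(\rho')^{n}$, i.e.\ $\lambda_{\max}(\rho)\geq\lambda_{\max}(\rho')$. Whenever $S_\infty(\rho)>S_\infty(\rho')$ --- a case the theorem must cover, and precisely the regime that makes it interesting because trumping fails there, e.g.\ $\rho=\mathrm{diag}(0,\tfrac12,\tfrac12)$, $\rho'=\mathrm{diag}(\tfrac16,\tfrac16,\tfrac23)$ of Fig.~\ref{fig:example} --- your construction therefore fails for \emph{every} $n$, not just small $n$: the obstruction sits at the very top of the Lorenz curve, where typicality gives no help. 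The missing idea is to first construct (by a typicality/flattening argument, which is where the rank hypothesis enters) a \emph{correlated} state $\omega_n$ on $\mc H_A^{\otimes n}$ whose every single-site marginal is exactly $\rho'$ and which satisfies $\rho^{\otimes n}\succeq\omega_n$, and then to build the chain from the partial traces $\tr_{k+1,\dots,n}[\omega_n]$ instead of from $\rho'^{\otimes k}$; with the counter register the $k$-th final block coincides with the $(k+1)$-st initial block, all common blocks cancel, and the global majorization reduces exactly to $\rho^{\otimes n}\succeq\omega_n$, which holds by construction. This is essentially how Ref.~\cite{Mueller2017} proceeds; without replacing $\rho'^{\otimes n}$ by such an $\omega_n$, no choice of $n$ rescues the argument.
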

Another result that will be used frequently is the Schur-Horn-Theorem. 
\begin{theorem}[Schur-Horn \cite{Horn1954}]
For a matrix $H$, let $\lambda(H)$ be the vector of its eigenvalues and $\mathrm{diag}(H)$ the vector of its diagonal entries. If $H$ is Hermitian, then the following are equivalent:
\begin{itemize}
\item $\lambda(H) \succeq \mathrm{diag}(H)$,
\item there exists a unitary matrix $U$ such that 
\[
U \hat{\lambda}(H) U^\dagger = H,
\]
\end{itemize}
where $\hat{\lambda}(H)$ is the diagonal matrix with diagonal $\lambda(H)$. 
\end{theorem}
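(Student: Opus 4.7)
The Schur--Horn theorem asserts an equivalence between a spectral majorization condition and the existence of a diagonalizing unitary, and I would handle the two directions asymmetrically. The forward direction ``$H = U \hat\lambda U^\dagger$ implies $\lambda(H) \succeq \mathrm{diag}(H)$'' is a short consequence of Birkhoff--von Neumann: writing $H_{ii} = \sum_j |U_{ij}|^2 \lambda_j(H)$, one has $\mathrm{diag}(H) = B\, \lambda(H)$ with $B_{ij} := |U_{ij}|^2$ doubly stochastic (the row and column sums equal one by the unitarity of $U$). The Birkhoff--von Neumann theorem decomposes $B$ as a convex combination of permutation matrices, displaying $\mathrm{diag}(H)$ as a convex combination of permutations of $\lambda(H)$; the Hardy--Littlewood--P\'olya characterization then identifies this convex-hull membership with the majorization $\lambda(H) \succeq \mathrm{diag}(H)$.

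For the converse, given $\lambda \succeq d$ sorted in decreasing order with $\sum_i \lambda_i = \sum_i d_i$, I would construct by induction on the dimension $n$ a unitary $W$ with $\mathrm{diag}(W \hat\lambda W^\dagger) = d$; the final matrix equals $H$ up to a further unitary supported on the eigenspaces of $W \hat\lambda W^\dagger$ (which does not affect the diagonal or spectrum), and composing the two produces the $U$ demanded by the theorem. If $\lambda = d$, take $W = I$. Otherwise, majorization together with the sum condition forces indices $i < j$ with $\lambda_i > d_i$ and $\lambda_j < d_j$, so in particular $\lambda_i > \lambda_j$. The $2 \times 2$ rotation acting on coordinates $(i,j)$ replaces the diagonal pair $(\lambda_i, \lambda_j)$ by $(c^2 \lambda_i + s^2 \lambda_j,\, s^2 \lambda_i + c^2 \lambda_j)$ for any $c^2 + s^2 = 1$; choosing the angle so that the first entry becomes $d_i$, the new diagonal $\lambda'$ still satisfies $\lambda' \succeq d$ and agrees with $d$ in at least one more coordinate than $\lambda$. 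Iterating this ``T-transform'' at most $n-1$ times yields $W$.

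The principal technical step, and the only nonroutine part of the argument, is the verification that each T-transform preserves the majorization $\lambda' \succeq d$ and strictly increases the number of coordinates where the diagonal matches $d$. This is a partial-sum calculation that hinges on a careful choice of the index pair $(i,j)$ --- for example, $i$ the smallest index with $\lambda_i > d_i$ and $j$ the smallest index exceeding $i$ with $\lambda_j < d_j$. Once this combinatorial bookkeeping is set up correctly, the inductive construction terminates after at most $n-1$ rotations, and composing them with the basis-alignment unitary on the eigenspaces gives the $U$ required by the Schur--Horn statement.
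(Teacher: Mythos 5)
The paper does not prove this statement at all --- it is quoted from Horn's 1954 paper and used only through the consequence in Eq.~\eqref{eq:dephasing_schur} --- so there is no in-paper argument to compare against; what can be assessed is the internal soundness of your sketch, which follows the standard textbook route (Schur's direction via doubly stochastic matrices, Horn's direction via Givens rotations realizing a chain of T-transforms). The forward direction is fine as written: $\mathrm{diag}(H)=B\,\lambda(H)$ with $B_{ij}=|U_{ij}|^2$ doubly stochastic, then Birkhoff--von Neumann and Hardy--Littlewood--P\'olya (in fact HLP alone suffices, since $d=B\lambda$ with $B$ doubly stochastic already characterizes $\lambda\succeq d$).

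For the converse, the scheme is correct in substance but two points must be made explicit. First, your formula for the new diagonal pair after a rotation is only exact when the current matrix has a vanishing $(i,j)$ entry; after the first rotation the matrix is no longer diagonal. The iteration still works, but only because each rotation creates off-diagonal entries solely in the row and column of the index it sets equal to its target value, and matched indices are never chosen again, so every later pair $(i,j)$ still indexes a diagonal $2\times2$ block --- this needs to be stated, otherwise the claim that the product of rotations produces the diagonal $d$ is unsupported. Second, the lemma you defer (that the full-transfer T-transform with your index choice preserves $\lambda'\succeq d$ and strictly increases the number of matched coordinates) is true and is cleanest via index-order partial sums: with $i$ the first unmatched index (necessarily $\lambda_i>d_i$) and $j$ the first later index with $\lambda_j<d_j$, one has $\lambda_k=d_k$ for $k<i$ and $\lambda_k\geq d_k$ for $i<k<j$, whence $\sum_{k\leq m}\lambda'_k\geq\sum_{k\leq m}d_k$ for every $m$. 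Finally, your closing step --- that the constructed matrix ``equals $H$ up to a further unitary supported on the eigenspaces'' --- is not a valid inference: equal spectrum plus equal diagonal does not determine a Hermitian matrix, and a unitary block-diagonal with respect to the eigenspaces of $W\hat\lambda W^\dagger$ leaves that matrix unchanged. It is also unnecessary: read literally for the given $H$, the second bullet is just the spectral theorem, while the substantive content the paper extracts (Eq.~\eqref{eq:dephasing_schur}, a unitary rotation of $\hat\lambda$ with prescribed diagonal, up to dephasing and a basis change) is exactly what your construction of $W$ already delivers.
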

In particular, the Schur-Horn theorem implies that, if $\rho \succeq \rho'$, then there exist unitaries $U, V$ such that 
\begin{align} \label{eq:dephasing_schur}
\rho' = V \left(\mc D_J [U \rho U^\dagger ] \right) V^\dagger.
\end{align}
Here and in the following, in contrast to the main text, we explicitly denote the choice of basis $J=\{|j\rangle\}$ in the notation for the decoherence map, $\mc D=\mc D_J$. If we choose $J$ as the eigenbasis of $\rho'$ then $V$ is the identity map.
We are now in position to prove Theorem~\ref{thm:mainresult}.

\begin{proof}[Proof of Theorem~\ref{thm:mainresult}]
   We begin with proving that \ref{cond:mainfirst}) implies \ref{cond:mainsecond}).
	Thus, assume that $S(\rho)<S(\rho')$ and $\mathrm{rank}(\rho)\leq\mathrm{rank}(\rho')$. 
	Then Theorem~\ref{thm:markus} together with \eqref{eq:dephasing_schur} implies that there exists a unitary $W_{A,B}$ and two bases $J_A$ and $J_B$ such that
\[
		(\mc D_{J_A}\otimes \mc D_{J_B} )\left[W_{A,B} (\rho\otimes \tau) W_{A,B}^\dagger\right] =\rho'\tau.
\]
	From locality of quantum mechanics and the Schur-Horn theorem we thus find that
\[
		\rho'\tilde{\tau} :=  \mc (D_{J_A}\otimes \mathbb{I} )\left[W_{A,B} (\rho\otimes \tau) W_{A,B}^\dagger\right]
\]
is a quantum state with the properties $\tr_B[\rho'\tilde{\tau}] = \rho'$ and $\tilde\tau=\tr_A[\rho'\tilde\tau] \succeq \tau$. Here, $\mathbb{I}$ denotes the identity super-operator.
   
As a second step, we show that we can realize any dephasing map on a system $A$ using an ancillary system in a maximally mixed state. To see this, let $R$ be a system of the same dimension $d$ as $A$ and let $\{U_k\}_{k=1}^{d}$ be a unitary operator basis on $A$, meaning a collection of $d$ unitaries $U_k$ such that
\begin{align}
\tr\left[U_j U_k^\dagger\right] = d \delta_{j,k}. \label{eq:uob}
\end{align}
Such a set of operators exists on every finite-dimensional Hilbert-space \cite{Schwinger1960,Werner2001}. Then, define the unitary 
\[
    V_{A,R} = \sum_{j=1}^d  \proj{j}_A \otimes (U_j)_R,
\]
where we recall that $J= \{\ket{j}\}$. Now, it is easy to check that for any $\rho=\rho_A$,
\[
\tr_R \left[ V_{A,R} (\rho \otimes \mathbf{1}_{d}) V^\dagger_{A,R} \right] = \mc D_J[\rho].
\]
In a third step, we now show that we can use this dilation of the dephasing map to construct a catalyst for Theorem~\ref{thm:mainresult}. To do so, let  
\[
		\sigma := \tau \otimes \mathbf{1}_{d}
\]
	and define the unitary 
\[
	U_{A,B,R} = (V_{A,R}\otimes \mathbf{1}_B)(W_{A,B}\otimes \mathbf{1}_R). 
\]
	From the previous discussion and the construction of the dephasing unitary $V_{A,R}$, we know that 
\[
	\tr_{R}\left[U_{A,B,R} (\rho\otimes \sigma) U_{A,B,R}^\dagger \right] = \rho'\tilde{\tau}.
\]
	Thus, what is left to be proven is that $\sigma$ is indeed a valid catalyst, i.e., does not change in the course of the process except 
	from building up coherences. We will show that it undergoes the transition
\[
		\sigma = \tau \otimes \mathbf{1}_d \rightarrow \tilde\tau \otimes\mathbf{1}_d. 
\]
To show this, first note that the dephasing dilation implemented by $V_{A,R}$ leaves the state $\one_d$ of $R$ locally unchanged. But this means that we only have to show that $R$ does not become correlated with $B$ in the dephasing step, since it follows from locality that the marginal on $R$ remains unchanged and the marginal on $B$ evolves from $\tau$ to $\tilde\tau$.
To see that $B$ and $R$ remain uncorrelated, we simply compute the action of the dephasing unitary $V_{A,R}$ on $B,R$, to get
\begin{align*}
	&\tr_A\left[U_{A,B,R} (\rho\otimes \sigma) U_{A,B,R}^\dagger \right]\\ 
	&\quad= \sum_{j,k} \tr_A\left[\proj{j}_A W_{A,B} (\rho\otimes \tau) W_{A,B}^\dagger \proj{k}_A\right] \otimes \frac{U_j U_k^\dagger}{d_R} \\
	&\quad= \sum_j \bra{j}_A W_{A,B} (\rho\otimes \tau) W_{A,B}^\dagger \ket{j}_A \otimes \mathbf{1}_d\\
	&\quad = \tilde \tau \otimes\mathbf{1}_d,
\end{align*}
where we have dropped identities for notational convenience. This proves that \ref{cond:mainfirst}) implies \ref{cond:mainsecond}).

Let us now prove that \ref{cond:mainsecond}) implies \ref{cond:mainfirst}). 
In the following let $\alpha\in\{0,1\}$. 
Since $S_0(\rho) = \log(\mathrm{rank}(\rho))$, both $S_0$ and $S_1=S$ are subadditive and additive. Since the final state on the catalyst, which we now call $\sigma'$, satisfies $\mc D_J[\sigma']=\sigma$, it follows that $\sigma'\succeq \sigma$ and thus $S_\alpha(\sigma') \leq S_\alpha(\sigma)$. Furthermore, from additivity and subadditivity we get
\begin{eqnarray*}
	S_\alpha(\rho)+S_\alpha(\sigma) &=& S_\alpha (\rho\otimes\sigma)= S_\alpha(\rho' \sigma') \\
			      &\leq& S_\alpha(\rho') + S_\alpha(\sigma')\leq S_\alpha(\rho') + S_\alpha(\sigma).
\end{eqnarray*}
For $\alpha=0$, this proves $\mathrm{rank}(\rho)\leq \mathrm{rank}(\rho')$. For $\alpha=1$, equality, i.e.\ $S(\rho)=S(\rho')$, is only possible if $S(\rho'\sigma')=S(\rho')+S(\sigma')$, and it is well-known that this implies $\rho'\sigma'=\rho'\otimes\sigma'$. Thus $\rho\otimes\sigma\succ \rho'\otimes\sigma'\succ \rho'\otimes\sigma$, and so $\rho\succ_T \rho'$ for the trumping relation, which together with $S(\rho)=S(\rho')$ implies that $\rho$ and $\rho'$ have the same spectrum, i.e.\ are unitarily equivalent~\cite{Klimesh2007,Turgut2007}. This contradicts the assumptions of the theorem. We must thus have $S(\rho)<S(\rho')$, which completes the proof.
\end{proof}

Let us now turn to the proof of Lemma~\ref{lemma:weaksolution}, which builds on the proof of Theorem~\ref{thm:mainresult}. 
\begin{proof}[Proof of Lemma~\ref{lemma:weaksolution}]
	For this proof we re-use all the notation from the proof of the implication \ref{cond:mainfirst})$\Rightarrow$\ref{cond:mainsecond}) of Theorem~\ref{thm:mainresult}. 
	In particular note that the final state $\tilde\tau$ on the $B$-subsystem of the catalyst only needs to be dephased in a basis $J_B$ to be returned exactly, since, by construction, $\mathrm{diag}(\tilde\tau) = \lambda(\tau)$. Using the dephasing construction already used in the proof of Theorem~\ref{thm:mainresult} we can include a further system $R_2$ in the maximally mixed state into the system and use the dephasing ttunitary $V_{R_2B}$ at the end of the process to dephase system $B$. The only property that we still need to prove is that this does not introduce correlations between $A$ and $R_2$. However, this is exactly the same calculation that shows that there are no correlations between $B$ and $R$ at the end in the proof of Theorem~\ref{thm:mainresult}. We only have to exchange $R$ for $R_2$ and $B$ for $A$. This finishes the proof.  
	\end{proof}
	
\section{Catalytic cooling}\label{app:cooling}

Let us first present in detail how to prepare almost pure states with a protocol that uses Theorem~\ref{thm:mainresult}. 
Using this theorem, we have that, given system ${Q_1}$ in state $\rho_{{Q_1}}=\varrho \otimes \varrho$ with $2S(\varrho)< 1$, one can find $U$ and a catalyst ${C}$ in state $\sigma$ so that 
\begin{align}
\gamma_{Q_1C} =(\mathcal{D}_J \circ \mathcal{U}_1) [ \rho_{{Q_1}} \otimes \sigma_{C}]
\end{align}
where $\mathcal{D}_J$ is the map locally dephasing the system $C$ and leaving ${Q_1}$ untouched (formally $\mathbb{I}_{Q_1} \otimes \mathcal{D}_J$), and $\mathcal{U}_1[\bullet] = U \bullet U^{\dagger}$. Also, we denote by $\gamma_{Q_1C}$ a bipartite state on ${Q_1C}$ which, according to Theorem~\ref{thm:mainresult}, fulfills $\tr_{Q_1}(\gamma_{Q_1C})=\sigma_{C}$ and $\tr_{C}(\gamma_{Q_1C})=\rho'_{Q_1}=\varrho' \otimes \one_2$, where $\varrho'$ can be any full-rank state, but in the following we are interested in the case where $\rho'$ is arbitrarily close to a pure state.

This protocol can be iterated on an arbitrary number $n$ of subsystems ${Q_1,\ldots,Q_n}$, taking initially $\rho_{Q_1,\ldots,Q_N} = \rho_{Q_1} \otimes \cdots \otimes \rho_{Q_n}$ as input, where $\rho_{Q_i}=\varrho \otimes \varrho$ for all $i$. We define the unitary channels $\mathcal{U}_i$ which apply the unitary $U$ to systems $Q_iC$ and act trivially in the rest of the subsystems, that is, 
\begin{align}
\mathcal{U}_i[\bullet]= U_{Q_i C} \otimes \mathbb{I}_{|Q_iC} \: \bullet \: U_{Q_i C}^{\dagger} \otimes \mathbb{I}_{|Q_iC}. 
\end{align}
Then, applying these unitary channels, each followed by a dephasing map on $C$, one obtains 
\begin{align}
\label{eq:sequential_cooling}\gamma_{Q_1,\ldots , Q_n C} =\mathcal{D}_J \circ \mathcal{U}_n \circ \cdots \circ \mathcal{D}_J \circ \mathcal{U}_1 [\rho_{Q_1,\ldots,Q_n} \otimes \sigma]
\end{align}
where, due to Theorem~\ref{thm:mainresult}, we have 
\begin{align*}
&\tr_{|Q_i} (\gamma_{Q_1,\ldots , Q_n C})=\varrho' \otimes \one_2  \:\: \forall \:i,\\
&\tr_{|C} (\gamma_{Q_1,\ldots , Q_n C})=\sigma.
\end{align*}
Hence, with this protocol we have prepared $n/2$ subsystems whose marginal $\varrho'$ is arbitrarily close to a pure state. Note, however, that the resulting state of the compound $\gamma_{Q_1,\ldots,Q_n}$ displays correlations between its parts, hence, although each subsystem in state $\varrho'$ can be individually used ---for instance as a pure state input of a quantum computation--- the whole compound $\gamma_{Q_1,\ldots,Q_n}$ deviates from the state
\[
\tilde{\gamma}_{Q_1,\ldots,Q_n}\coloneqq  \rho'_{Q_1} \otimes \cdots \otimes \rho'_{Q_n}.
\]
This can be seen for instance by comparing the minimum eigenvalue $\lambda_\text{min}$ of both states in the limit of large $n$, which gives
\begin{align}
\label{eq:min_eigenvalue_1}\lim_{n \rightarrow \infty} \frac{ \lambda_{\min} ( \tilde{\gamma}_{Q_1,\ldots,Q_n})} { \lambda_{\min} (\gamma_{Q_1,\ldots,Q_n})}&\leq \lim_{n \rightarrow \infty} \frac{ \lambda_{\min} ( \tilde{\gamma}_{Q_1,\ldots,Q_n})}{ \lambda_{\min} (\gamma_{Q_1,\ldots,Q_nC})} \\
\label{eq:min_eigenvalue_2}&\leq \lim_{n \rightarrow \infty} \frac{ \lambda_{\min} ( \tilde{\gamma}_{Q_1,\ldots,Q_n})}{ \lambda_{\min} (\rho_{Q_1,\ldots, Q_n} \otimes \sigma) }\\
\label{eq:min_eigenvalue_3}&= \lim_{n \rightarrow \infty} \frac{(\frac 1 2 \lambda_{\min}(\varrho'))^n}{\lambda_{\min}(\varrho)^{2n} \lambda_{\min}(\sigma)}
\\
\label{eq:min_eigenvalue_4}&=0
\end{align}
where~\eqref{eq:min_eigenvalue_1} follows simply because tracing out one subsystem can only increase the minimum eigenvalue; \eqref{eq:min_eigenvalue_2} follows due to \eqref{eq:sequential_cooling}. To see this note that map $\mathcal{D}_J \circ \mathcal{U}_n \circ \cdots \circ \mathcal{D}_J$ can be implemented as a global unitary on $Q_1,\ldots,Q_n$ together with a source of randomness of sufficiently large dimension $d$ which is responsible of the dephasing. That is, there exists $V$ so that
\[
\tr (V \rho_{Q_1,\ldots,Q_n} \otimes \sigma \otimes \one_dV^{\dagger}) =\gamma_{Q_1,\ldots Q_n C}. 
\]
This implies in turn that $\rho_{Q_1,\ldots,Q_n} \otimes \sigma \succeq \gamma_{Q_1,\ldots Q_n C}$ (see for instance Ref. \cite{Gour2015}) and that
\[
\lambda_{\text{min}}(\rho_{Q_1,\ldots,Q_n} \otimes \sigma ) \leq \lambda_{\mathrm{min}}(\gamma_{Q_1,\ldots Q_n C}).
\]
Eq.~\eqref{eq:min_eigenvalue_3} follows from simple algebra. Lastly, \eqref{eq:min_eigenvalue_4} follows from the fact that $\lambda_{\text{min}}(\varrho')$ is arbitrarily small while $\lambda_{\text{min}}(\sigma)>0$. To see the latter we recall the result of Appendix F.1 from Ref. \cite{Wilming2017}, which shows that any transition of the form \eqref{eq:sequential_cooling} employing a catalyst $\sigma$ without full rank with spectrum $\{\sigma_i\}$, can be also be implemented with a full-rank catalyst $\tilde{\sigma}$ with spectrum $\{\sigma_i | \sigma_i>0\}$. In other words, we can assume without loss of generality that $\sigma$ is full rank.

\section{The classical case}\label{app:classical}

In the following, we denote the marginals of a probability distribution $r$ on $X\times Y$ by $r_X$ resp.\ $r_Y$, such that $r_X(x)=\sum_{y\in Y} r(x,y)$ and $r_Y(y)=\sum_{x\in X} r(x,y)$. This is the classical analogue of the partial trace.

\begin{conjecture}[Classical catalytic entropy conjecture]
	Let $p$ and $p'$ be two different probability distributions on a finite space of events $X$. Then the following two statements are equivalent:
	\begin{enumerate}[i)]
	\item $S(p)\leq S(p')$, where $S$ is the Shannon entropy. \label{cond:classic_1}  
	\item For every $\epsilon>0$, there exists a probability distribution $q$ on a finite space $Y$ and a permutation $P$ on $X\times Y$ such that
	\begin{align}
         \left[ P (p\otimes q) \right]_Y = q,\quad \norm{[P (p \otimes q) ]_X-p' }_1\leq \epsilon. \label{cond:classic_2}
	\end{align}
	\end{enumerate}
\end{conjecture}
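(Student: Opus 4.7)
The plan is to treat the two implications separately. For (ii)$\Rightarrow$(i), a subadditivity-plus-continuity argument suffices. Writing $r = P(p\otimes q)$, the hypothesis gives $r_Y = q$ and $\|r_X - p'\|_1 \leq \epsilon$. Since $P$ is a permutation, $S(r) = S(p\otimes q) = S(p)+S(q)$; combining with subadditivity and $r_Y = q$ yields $S(r_X)\geq S(p)$. A standard continuity bound for the Shannon entropy then controls $|S(r_X)-S(p')|$ by a quantity of order $\epsilon\log|X| + h_2(\epsilon)$, and sending $\epsilon\to 0$ delivers $S(p)\leq S(p')$.

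For (i)$\Rightarrow$(ii), my plan is first to reduce to the strict case $S(p)<S(p')$: if $S(p)=S(p')$ and $p\neq p'$, one can perturb $p'$ to a nearby full-rank $\tilde p$ with $S(\tilde p)>S(p)$, and a catalytic transition $p\to\tilde p$ constructed in the strict case approximates $p\to p'$ to arbitrary $\epsilon$. Next, I would derive the classical analog of Theorem~\ref{thm:markus} by specializing the quantum statement to $\rho=\diag(p)$ and $\rho'=\diag(p')$. Since majorization depends only on spectra, the catalyst $\tau$ can be chosen diagonal, and dephasing the joint $\rho'\tau$ in the product basis preserves both marginals while only coarsening the majorization $\rho\otimes\tau\succeq \rho'\tau$. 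This yields a classical distribution $r$ on $X\times Y_0$ with $r_X=p'$, $r_Y=\tau$, and $p\otimes\tau\succeq r$, and Birkhoff's theorem then writes $r=\sum_i\lambda_i P_i(p\otimes\tau)$ for permutations $P_i$ on $X\times Y_0$.

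The crux is to convert this convex mixture into a single permutation while returning the catalyst exactly. The natural dilation adjoins a uniform ancilla $Z$ of size $N$ with $\lambda_i\approx n_i/N$ and defines $\tilde P(x,y,z)=(P_{i(z)}(x,y),z)$, where $i(z)$ indexes the slot that $z$ belongs to. A direct computation shows that the $X$-marginal is within $O(1/N)$ of $p'$ and the $Z$-marginal is exactly uniform, but in general the joint marginal on $Y_0\times Z$ differs from $\tau\otimes u_N$, because individual $P_i$ need not preserve the $Y_0$-marginal of $p\otimes\tau$; the catalyst is thus returned only on average, not exactly.

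I expect this to be the main obstacle. Possible ways of absorbing the $z$-dependent fluctuations include: (a) composing $\tilde P$ with a second permutation on $Y_0\times Z$ implementing a Cayley-table-style action of a group on $Z$ controlled by $Y_0$ that averages the residual $Y_0$-conditional distributions uniformly over $z$; (b) giving up the product form $\tau\otimes u_N$ in favour of a correlated catalyst on $Y_0\times Z$ tailored to the Birkhoff decomposition so as to be preserved by the dilation permutation by construction, turning the problem into a fixed-point question on the space of catalyst distributions; or (c) an asymptotic route in which $q$ is taken to consist of many i.i.d.\ copies of a carefully chosen distribution and marginals are matched within $\epsilon$ via the method of types, trading the approximation parameter against catalyst size. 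In all three, a genuinely new combinatorial ingredient seems required, since the quantum proof of Theorem~\ref{thm:mainresult} leans crucially on a unitary operator basis to dilate a dephasing map with an uncorrelated uniform ancilla, a device with no direct analog in the purely permutation-based classical setting.
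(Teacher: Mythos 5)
The statement you are trying to prove is presented in the paper as a \emph{conjecture}, not a theorem: the paper offers no proof of it, and only establishes a weakened version (Lemma~\ref{lemma:classical_weaksolution}) in which the system is accompanied by an extra uniform register $\mathbf{1}_d$ on both sides of the transition. Your proposal does not close this gap either, and to your credit you say so explicitly. The direction (ii)$\Rightarrow$(i) is fine: permutation invariance of $S$, subadditivity, and a Fannes-type continuity bound give $S(p)\leq S(p')$ exactly as you describe. The direction (i)$\Rightarrow$(ii) is where your argument stops being a proof. After obtaining $r$ on $X\times Y_0$ with $p\otimes\tau\succeq r$, $r_X=p'$, $r_Y=\tau$ and writing $r=\sum_i\lambda_i P_i(p\otimes\tau)$ via Birkhoff, the dilation $\tilde P(x,y,z)=(P_{i(z)}(x,y),z)$ indeed fails to return the catalyst exactly, because the individual $P_i$ only preserve the $Y_0$-marginal on average over $i$. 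You correctly identify this as the crux and offer three speculative repairs without carrying any of them out.

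It is worth noting that your repair (a) --- composing with a $Y_0$-controlled cyclic action on the ancilla $Z$ --- is precisely the mechanism the paper uses to prove the weak classical lemma: there one takes $P=\sum_i\Pi_i\otimes P_i$ followed by $P'=\sum_i(\Pi_i)_Y\otimes(\pi^i)_A$ with $\pi$ a cyclic shift. This decorrelates the uniform ancilla from the catalyst, but the price is that the ancilla remains attached to the system side as an uncorrelated $\mathbf{1}_d$ factor, i.e.\ one proves $p\otimes\mathbf{1}_d\to p'\otimes\mathbf{1}_d$ rather than $p\to p'$. Absorbing that register into the catalyst, or eliminating it, is exactly what is missing from both your argument and the paper's; this is why the full statement remains a conjecture. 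So your proposal should not be read as a proof but as a correct reduction of the conjecture to the open combinatorial problem of exact catalyst return under a single permutation.
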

There are two reasons for which we only conjecture approximability of $p'$ to arbitrary accuracy instead of perfect achievability. Firstly, in order to drop the rank condition from condition \ref{cond:classic_1}); secondly, to account for the case in which $p$ and $p'$ differ by irrational amounts. In this case, permutations only realize the transition $p \to p'$ approximately.

Note that, since the statement of the catalytic entropy conjecture is unitarily invariant on the input states, and permutations are special cases of unitary operations, a proof of the classical catalytic entropy conjecture would essentially also prove the quantum version. 
The converse, however, is not necessarily true: it is apriori possible that only the quantum formulation holds. 
Nevertheless, as in the quantum case, one can show that the Shannon entropy is essentially the unique additive monotone. 
This follows from the following classical version of Lemma~\ref{lemma:weaksolution}. 
It uses the notation ${\rm rank}(p)$ to denote the number of non-zero entries of a discrete probability distribution $p$.
\begin{lemma}[Weak solution to catalytic entropy conjecture (classical)]\label{lemma:classical_weaksolution} 
Let $p$ and $p'$ be two different probability distributions of the same, finite dimension and with rational entries. Then the following two statements are equivalent:
\begin{enumerate}[(I)]
	\item \label{cond:classical_weak_1} $S(p')>S(p)$ and $\mathrm{rank}(p')\geq \mathrm{rank}(p)$.
	\item \label{cond:classical_weak_2} There exists a probability distribution $q$ on a finite sample space $Z$, a $d$-dimensional sample space $Y$, and a permutation $P$ on $X\times Y \times Z$ such that
\[
		\left[ P (p\otimes \mathbf{1}_d \otimes  q) \right]_Z = q,\quad [P (p \otimes \mathbf{1}_d \otimes q) ]_{X,Y} = p' \otimes \mathbf{1}_d. 
\]
\end{enumerate}
\end{lemma}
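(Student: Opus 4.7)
My plan is to reproduce, in the classical setting, the structure of the proof of Lemma~\ref{lemma:weaksolution}, with unitaries replaced by permutations and density matrices by probability distributions. First, I would establish a classical analog of Theorem~\ref{thm:markus} for rational inputs: given $p, p'$ with $S(p) < S(p')$ and the rank condition, there exist a rational probability distribution $\tau$ on some sample space $W$ and a rational joint $p'\tau$ on $X \times W$ such that $p \otimes \tau \succeq p'\tau$ in the sense of classical majorization, with marginals $p'$ on $X$ and $\tau$ on $W$. This should follow by applying the quantum Theorem~\ref{thm:markus} to the diagonal density matrices $\mathrm{diag}(p), \mathrm{diag}(p')$ and tracking that the Mueller construction can be chosen to preserve classicality and rationality of the resulting catalyst.

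Next, by the Hardy--Littlewood--Polya theorem and the rational version of Birkhoff's theorem, the majorization $p \otimes \tau \succeq p'\tau$ yields a convex decomposition $M = \sum_k \lambda_k \Pi_k$ of the underlying bistochastic map into permutations $\Pi_k$ on $X \times W$ with rational weights $\lambda_k$ summing to one. I would then turn this mixture into a single permutation on an enlarged space by absorbing the realization randomness into an extended catalyst: choosing $d'$ to be a common denominator of the $\lambda_k$, set $Z := W \times Y'$ with $|Y'| = d'$ and $q := \tau \otimes \mathbf{1}_{d'}$, and define the candidate permutation on $X \times Y \times Z$ by
\[
P(x, y, w, y') \;=\; (\Pi_{k(y')}(x, w),\, y,\, y'),
\]
where $Y' = \bigsqcup_k Y'_k$ with $|Y'_k| = \lambda_k d'$ determines $k(y')$. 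Since $P$ acts trivially on $Y$, a direct computation shows that $P(p \otimes \mathbf{1}_d \otimes q)$ has $XY$-marginal exactly $p' \otimes \mathbf{1}_d$ (averaging over $y'$ reproduces $M(p \otimes \tau)_X = p'$, which tensors with the untouched uniform on $Y$) and total $W$-marginal exactly $\tau$.

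The principal obstacle is to ensure that the full $Z$-marginal of $P(p \otimes \mathbf{1}_d \otimes q)$ coincides with $\tau \otimes \mathbf{1}_{d'}$, i.e.\ that $W$ and $Y'$ remain uncorrelated in the output. A short computation gives this $Z$-marginal as $\tfrac{1}{d'}\bigl(\Pi_{k(y')}(p \otimes \tau)\bigr)_W(w)$, which equals $\tau(w)/d'$ for every $y'$ if and only if each permutation $\Pi_k$ appearing in the decomposition of $M$ individually preserves the $W$-marginal of $p \otimes \tau$, not merely on average. The hard step is therefore to produce, for the rational bistochastic $M$ at hand, a rational-weight decomposition $M = \sum_k \lambda_k \Pi_k$ into such ``$W$-marginal preserving'' permutations. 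The quantum proof sidesteps this via the dephasing dilation $V_{R_2,B}$, whose action relies on the unitary-operator-basis identity $\tr(U_j U_k^\dagger) = d\,\delta_{jk}$ and has no direct permutation analog; I expect the classical resolution to exploit either the specific structure of the joint $p'\tau$ supplied by Theorem~\ref{thm:markus} (whose natural construction tends to proceed via $W$-preserving elementary moves), or an iterative construction in which additional uniform auxiliary registers are brought in and folded into the catalyst to absorb the residual correlations, with rationality ensuring that the process terminates after finitely many iterations.
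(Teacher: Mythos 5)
Your reduction — invoke the classical version of Theorem~\ref{thm:markus} to get $p\otimes\tau\succeq p'\tau$ with rational data, decompose the resulting bistochastic map into permutations with rational weights, and trade the mixture for a single permutation controlled on an extra uniform register that is folded into the catalyst — is exactly the paper's first half. But the proposal then stops at the point you yourself flag as ``the hard step'': the control register ends up correlated with the $W$-part of the catalyst, so the $Z$-marginal is not $\tau\otimes\mathbf{1}_{d'}$. Neither of your two suggested escape routes is what works: the Birkhoff decomposition cannot in general be chosen so that each $\Pi_k$ individually preserves the $W$-marginal, and no iterative absorption is needed. Since the construction as written does not satisfy the catalyst condition, this is a genuine gap, not a loose end.

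The paper's resolution is one additional controlled permutation, and it \emph{is} the permutation analogue of the dephasing dilation you claimed has no classical counterpart. Write the random permutation as $P=\sum_{i=1}^{\tilde d}\Pi_i\otimes P_i$, controlled on an ancilla $A$ prepared in $\mathbf{1}_{\tilde d}$ (so $A$ carries the realization randomness and becomes correlated with $X\times\tilde Z$). Take the lemma's uniform register $Y$ to have dimension $d=\tilde d$ and apply $P'=\sum_{i}(\Pi_i)_Y\otimes(\pi^i)_A$, where $\pi$ is the cyclic shift $\mathbf{e}_j\mapsto\mathbf{e}_{j+1\bmod d}$; the powers of $\pi$ are the permutation remnant of the Heisenberg--Weyl operator basis used in the quantum proof. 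The output state is $\sum_{i,j}(\mathbf{e}_j)_Y/d\otimes(\mathbf{e}_{i+j\bmod d})_A/d\otimes P_i(p\otimes\tilde q)$: summing over $j$ makes $A$ exactly uniform and product with $X\times\tilde Z$, so the catalyst $Z:=A\times\tilde Z$ in state $q:=\mathbf{1}_d\otimes\tilde q$ is returned exactly, while summing over $i$ gives $X\times Y$ marginal $p'\otimes\mathbf{1}_d$. Three-way correlations among $Y$, $A$ and $\tilde Z$ remain, but the lemma permits them. Without this decorrelation step (or an explicit substitute) your argument does not close.
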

Here, $\mathbf{1}_d^\top = (1/d, \dots, 1/d)$ denotes the uniform distribution on $Y$.
\begin{proof}
	We only consider the non-obvious direction, i.e.\ we show that \eqref{cond:classical_weak_1}$\Rightarrow$\eqref{cond:classical_weak_2}. According to Ref.~\cite{Mueller2017}, if condition~\eqref{cond:classical_weak_1} is satisfied, then there exists a probability distribution $\tilde{q}$ on some sample space $\tilde{Z}$ such that $p \otimes \tilde{q} \succeq p'\tilde{q}$. 
	Since $p,p'$ are rational, and so are $\tilde q$ and $p'\tilde q$, the majorization relation implies that this transition can be realized exactly with a random permutation. In other words, there exists a $\tilde{d}$-dimensional ancilla $A$ in the state $\mathbf{1}_{\tilde{d}}$ and the global permutation $P = \sum_{i=1}^{\tilde{d}}\Pi_i \otimes P_i$, where $\Pi_i$ denotes the rank-one projector onto the standard basis $\{\mathbf{e}_i\}$ of $A$, that is, 
	$\Pi_i(q) = q_i \mathbf{e}_i$, such that 
\begin{align}	
	\frac{1}{\tilde{d}} \sum_{i=1}^{\tilde{d}} P_i(p \otimes \tilde{q}) = p'\tilde{q}.
\end{align}	 
Next, choose $d = \tilde{d}$ as the dimension of $Y$ and consider the permutation 
\begin{align}
    P' = \sum_{i=1}^{d} (\Pi_i)_Y \otimes (\pi^i)_A,
\end{align}
where $\pi$ is a permutation defined by $\pi \mathbf{e}_j = \mathbf{e}_{j+1 \mathrm{mod}\ d}$. Applying both of these permutations to the total system yields 
\begin{align*}
   & P' P \left[\strut (\mathbf{1}_{\tilde{d}})_Y \otimes (\mathbf{1}_{\tilde{d}})_A \otimes p\otimes  \tilde{q}\right] \\
	= & P' \left[(\mathbf{1}_{\tilde{d}})_Y \otimes \left( \sum_{i}^d (\mathbf{e}_i)_A/d \otimes P_i(p \otimes \tilde{q}) \right)\right] \\
   = & \sum_{i,j=1}^d (\mathbf{e}_j)_Y/d \otimes (\mathbf{e}_{i + j\mod d})_A/d \otimes P_i(p \otimes \tilde{q}).
\end{align*}
	From the last expression, we see that summing over $Y$ leaves $A$ uncorrelated from both $X$ and $\tilde{Z}$, since $\sum_j \Pi_{i+j}/d^2 = \mathbf{1}_d$, and summing over $A$ leaves $Y$ and $X$ uncorrelated. Hence, by identifying $Z = A \times \tilde{Z}$ and $q = (\mathbf{1}_d)_A \otimes \tilde{q}$, the statement of the lemma follows.
\end{proof}

\section{$S$ is the only continuous additive monotone}
\label{appSunique}
Here we give a proof of Corollary~\ref{cor:exclude}. This corollary follows immediately from the following lemma, which itself has Lemma~\ref{lemma:weaksolution} as its key ingredient.

\begin{lemma}[Properties of real and additive functions]
Let $f$ be a real function on the set of all finite-dimensional density matrices	 which is \emph{continuous} (on all subsets of density matrices of fixed dimension) and \emph{additive}, i.e.\ $f(\rho\otimes\sigma)=f(\rho)+f(\sigma)$. Furthermore, suppose that $f$ is a \emph{monotone} with respect to transitions of the form (b) of Conjecture~\ref{conj:catalytic}, i.e.\ satisfaction of condition (b) implies that $f(\rho)\leq f(\rho')$. Then there exist a constant $a\geq 0$ and dimension-dependent constants $b_n\in\R$, such that
\[
   f(\rho)=a\cdot S(\rho)+b_n,
\]
with $n$ the Hilbert space dimension of $\rho$, and $b_{m,n}=b_m+b_n$.
\end{lemma}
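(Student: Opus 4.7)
The plan is to first use Lemma~\ref{lemma:weaksolution} together with additivity of $f$ to turn the assumed monotonicity into an entropy-monotonicity, then use continuity to show that $f$ depends only on the entropy and the dimension, and finally reduce the remaining problem to Cauchy's functional equation.

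First I would combine Lemma~\ref{lemma:weaksolution} with additivity to derive the implication ``$S(\rho)<S(\rho')$ and $\mathrm{rank}(\rho)\leq\mathrm{rank}(\rho')$ $\Rightarrow$ $f(\rho)\leq f(\rho')$''. The lemma provides a transition of the form \eqref{cond:conjecture_2} from $\rho\otimes\one_d$ to $\rho'\otimes\one_d$, so by monotonicity and additivity, $f(\rho)+f(\one_d)\leq f(\rho')+f(\one_d)$, and the $\one_d$ terms cancel. Approximating any $\rho$ by a sequence of full-rank states of nearby entropy and using continuity, the rank condition can then be dropped, so that $S(\rho)<S(\rho')$ alone implies $f(\rho)\leq f(\rho')$.

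Next I would show that $f$ depends only on the entropy and the Hilbert-space dimension. Fix an $n$ and two $n$-dimensional states $\rho,\rho'$ with $S(\rho)=S(\rho')$. Unless both equal $\one_n/n$ (in which case $\rho=\rho'$), one can perturb $\rho'$ towards $\one_n/n$ to produce a sequence $\rho'_\epsilon\to\rho'$ with $S(\rho'_\epsilon)>S(\rho)$; the previous step gives $f(\rho)\leq f(\rho'_\epsilon)$, and continuity yields $f(\rho)\leq f(\rho')$. Swapping the roles of $\rho$ and $\rho'$ gives equality. Thus, for each $n$, there is a continuous nondecreasing function $g_n:[0,\log n]\to\R$ with $f(\rho)=g_n(S(\rho))$.

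I would then exploit additivity to relate the $g_n$ across dimensions. Setting $b_n:=g_n(0)$ (the common value of $f$ on any pure state of dimension $n$), tensoring a pure state of dimension $m$ with an arbitrary state of dimension $n$ gives $g_{mn}(x)=g_n(x)+b_m$ on $[0,\log n]$, and symmetrically $g_{mn}(x)=g_m(x)+b_n$ on $[0,\log m]$. Hence $g_n(x)-b_n=g_m(x)-b_m$ on the common domain, so there is a single continuous function $g:[0,\infty)\to\R$ with $g_n(x)=g(x)+b_n$, and tensoring two pure states gives $b_{mn}=b_m+b_n$. Substituting back into the additivity identity $g_{mn}(S(\rho)+S(\sigma))=g_m(S(\rho))+g_n(S(\sigma))$ for general $\rho,\sigma$ collapses it to Cauchy's equation $g(x+y)=g(x)+g(y)$. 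Continuity of $g$ forces $g(x)=ax$, and the monotonicity inherited from $f$ gives $a\geq 0$, yielding $f(\rho)=a\,S(\rho)+b_n$.

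The step I anticipate as the main obstacle is the second one, establishing that $f$ is truly a function of $S$ alone. This hinges on being able to continuously deform a state to a nearby state with strictly larger entropy, which is immediate for states in the interior of the simplex of eigenvalues but more delicate near the boundary (pure states, low-rank states, and the maximally mixed state itself). These cases should be treated by using continuity of the von Neumann entropy on the compact set of $n$-dimensional density matrices and by approximating via full-rank states whose entropies bracket $S(\rho)$ from both sides.
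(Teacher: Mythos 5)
Your proof is correct and follows essentially the same route as the paper: use Lemma~\ref{lemma:weaksolution} plus additivity and continuity to show that $f$ is a monotone function of the entropy (and dimension), then reduce additivity to Cauchy's functional equation and invoke monotonicity to get linearity with $a\geq 0$. The only difference is cosmetic: the paper parametrizes by the negentropy $I(\rho)=\log n - S(\rho)$ and the shifted function $j(\tau)=f(\one_n)-f(\tau)$ to compare across dimensions in one step, whereas you fix the dimension first and glue the resulting functions $g_n$ together via pure states --- both are valid.
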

\begin{proof}
For any density matrix $\rho$ of dimension $n$, define the negentropy $I(\rho):=\log n -S(\rho)$. Let $\rho,\rho'$ be full-rank density matrices of possibly different dimensions $n,n'$ such that $I(\rho)=I(\rho')$, then
\[
   S(\rho\otimes \one_{n'})=\log n -I(\rho)+\log n'=S(\rho'\otimes \one_n).
\]
Let $\epsilon>0$, and let $\sigma_\epsilon$ be any full-rank state of size $nn'$ such that $\|\sigma_\epsilon-\rho\otimes\one_{n'}\|<\epsilon$ and $S(\sigma_\epsilon)<S(\rho\otimes\one_{n'})$, then $S(\sigma_\epsilon)<S(\rho'\otimes \one_{n})$, hence Lemma~\ref{lemma:weaksolution} implies that there is some $d\in\N$ such that $\sigma_\epsilon\otimes\one_d\to \rho'\otimes \one_n\otimes\one_d$, where ``$\to$'' denotes that a transition of the form (b) is possible. Thus 
\[
f(\sigma_\epsilon\otimes \one_d)\leq f(\rho'\otimes\one_n\otimes\one_d), 
\]
and additivity of $f$ yields $f(\sigma_\epsilon)\leq f(\rho'\otimes\one_n)$. Since $\lim_{\epsilon\to 0}\sigma_\epsilon =\rho\otimes\one_{n'}$, and since $f$ is continuous, this implies that $f(\rho\otimes\one_{n'})\leq f(\rho'\otimes\one_n)$. Reversing the roles of $\rho$ and $\rho'$ in the above argumentation gives the converse inequality, and hence $f(\rho\otimes\one_{n'})=f(\rho'\otimes\one_n)$. Define the new real function $j(\tau):=f(\one_n)-f(\tau)$, where $n$ is the dimension of the density matrix $\tau$, then $j$ is also additive, and it vanishes on the maximally mixed states. Thus $j(\rho)=j(\rho')$.

In summary, we have shown that $j$ is constant on the level sets of $I$. Thus, there is a real function $g:[0,\infty)\to\R$ such that $j(\rho)=g(I(\rho))$ for all $\rho$. Let $x,y\in [0,\infty)$ with $x<y$, and let $\rho_x,\rho_y$ be finite-dimensional full-rank density matrices of dimensions $n_x,n_y$ with $I(\rho_x)=x$ and $I(\rho_y)=y$. Then
\begin{eqnarray*}
g(x+y)&=& g(I(\rho_x)+I(\rho_y))=g(I(\rho_x\otimes\rho_y))\\
&=& j(\rho_x\otimes\rho_y)=j(\rho_x)+j(\rho_y)\\
&=& g(I(\rho_x))+g(I(\rho_y))=g(x)+g(y).
\end{eqnarray*}
Furthermore, $S(\rho_y\otimes \one_{n_x})<S(\rho_x\otimes\one_{n_y})$, hence there is some $d\in\N$ such that $\rho_y\otimes \one_{n_x}\otimes\one_d\to \rho_x\otimes \one_{n_y}\otimes \one_d$, therefore $j(\rho_y\otimes \one_{n_x}\otimes \one_d)\geq j(\rho_x\otimes \one_{n_y}\otimes\one_d)$, and additivity implies $j(\rho_y)\geq j(\rho_x)$. It follows that $g(y)\geq g(x)$.

We thus see that $g$ is both \emph{additive} and \emph{non-decreasing}, and it is well-known (and easy to verify) that this implies that $g(x)=ax$ for some $a\geq 0$, i.e.\ $j(\rho)=a I(\rho)$. Going back to the definition of $f$, this gives us
\[
   f(\rho)=a S(\rho)+b_n,
\]
with $n$ the dimension of $\rho$ and $b_n:=f(\one_n)-a\log n$. Finally, additivity of $f$ and $\one_{m,n}=\one_m\otimes\one_n$ imply $b_{m,n}=b_m+b_n$.
\end{proof}
Note that $b_{m,n}=b_m+b_n$ does not automatically entail that $b_m$ is proportional to $\log m$ (and thus to $S_0$): there are other well-known examples of functions on the integers which are additive in this sense.

\end{document}